\documentclass{article}
\usepackage{graphicx}
\usepackage{amsmath, amsthm, latexsym}
\usepackage{amssymb}
\pagenumbering{arabic}
\newtheorem{theorem}{Theorem}

\newtheorem{definition}{Definition}
\newtheorem{corollary}{Corollary}
\newtheorem{proposition}{Proposition}
\newtheorem{remark}{Remark}

\begin{document}

\vspace*{3cm} \thispagestyle{empty}
\vspace{5mm}

\noindent \textbf{\Large Grand canonical ensembles in general relativity}\\

\noindent  \textbf{\normalsize David Klein}\footnote{Department of Mathematics and Interdisciplinary Research Institute for the Sciences, California State University, Northridge, Northridge, CA 91330-8313. Email: david.klein@csun.edu.}
\textbf{\normalsize and Wei-Shih Yang}\footnote{Department of Mathematics, Temple University, Philadelphia, PA 19122. Email: yang@temple.edu.}\\

\vspace{4mm} \parbox{11cm}{\noindent{\small We develop a formalism for general relativistic, grand canonical ensembles in space-times with timelike Killing fields. Using that, we derive ideal gas laws, and show how they depend on the geometry of the particular space-times. A systematic method for calculating Newtonian limits is given for a class of these space-times, which is illustrated for Kerr space-time. In addition, we prove uniqueness of the infinite volume Gibbs measure, and absence of phase transitions for a class of interaction potentials in anti-de Sitter space.}\vspace{5mm}\\

\noindent {\small KEY WORDS: relativistic Gibbs state, Fermi coordinates, grand canonical ensemble, ideal gas, anti-de Sitter space, de-Sitter space, Einstein static universe, Kerr space-time}\\

\noindent Mathematics Subject Classification: 82B21, 83C15, 83F05 
}\\
\vspace{6cm}
\pagebreak

\setlength{\textwidth}{27pc}
\setlength{\textheight}{43pc}
\noindent \textbf{{\normalsize 1. Introduction}}\\

\noindent Newtonian infinite volume grand canonical  Gibbs measures on Riemannian manifolds have been studied for more than a decade \cite{rockner, albeverio, kuna, kondratiev}.  Recently, in the context of general relativity, the finite volume canonical ensemble was developed and compared to Newtonian analogs for an ideal gas of test particles, provided that the container of gas is within the range of validity of Fermi coordinates for an observer whose worldline follows a timelike Killing vector, and the container is stationary relative to the observer \cite{KC2}. \\

\noindent Here we consider grand canonical ensembles, with the assumption that the test particles constituting the gas or fluid do not affect the background metric tensor.  One might expect that the theory of specifications and Gibbs states for Riemannian manifolds developed in \cite{rockner, albeverio, kuna}, and the references therein, could be applied in a straightforward manner to Lorentzian space-time manifolds, thus providing a framework for general relativistic statistical mechanics.  However, there are several impediments.  First, space-time and energy-momentum coordinates are not easily decoupled as in the Newtonian case.  One must work from the start with the cotangent bundle of space-time.  In some spacetimes there is a nonzero energy of the vacuum that must be accounted for in a thermodynamic formalism.  There are also ambiguities associated with the notion of a thermodynamic limit, since space-time can be foliated into space slices along an observer's timelike path in many different ways. In addition, the notion of potential energy, which is a feature of central importance in statistical mechanics, normally has no place in general relativity. These issues are considered in the sequel.\\

\noindent For the purposes of statistical mechanics, our requirement that the spacetime possess a timelike Killing field is natural. The Killing vector field, tangent to the timelike path of an observer, determines a time coordinate for the system, and thus an energy component of the four-momentum.  As a consequence, comparisons to corresponding non relativistic statistical mechanical formulas are possible in some space-times for certain observers.  Moreover, this choice of four-velocity forces the particle system to evolve in such a way that the geometry of spacetime is unchanging along its worldsurface (since the Lie derivative of the metric vanishes in that direction), making it plausible on physical grounds for the  system to reach equilibrium.\\

\noindent The significance of the frame of reference in relativistic statistical mechanics was discussed in \cite{KC2}. Even in Minkowski space-time, there is no Lorentz invariant thermal state for an ideal gas system. A thermal state is in equilibrium only in  a preferred Lorentz frame in which the container of gas is at rest, and therefore breaks Lorentz invariance \cite{rovelli}. The natural generalization for statistical mechanics in a space-time with a timelike Killing field is an orthonormal tetrad in which the Killing vector determines the time axis. \\

\noindent In Sect. 2 of this paper, we develop a non rotating, orthonormal ``Fermi-Walker-Killing" coordinate system in which the Killing vector serves as the time axis.  The measure on phase space based on these coordinates satisfies a Liouville theorem and is invariant under space coordinate transformations.  Sect. 3 applies the grand canonical formalism already available for Riemannian manifolds to the Fermi surface defined in the previous section, and establishes notation.  Sect. 4 provides a general relativistic ideal gas law. The term ``ideal gas'' is somewhat misleading in the context of general relativity. This is because a volume of gas subject to no forces is still affected by the curvature of space-time, and this corresponds to a Newtonian gas subject to gravitational, tidal, and in some instances ``centrifugal forces,'' but otherwise ``ideal.''\\  

\noindent In Sect. 5, we define ``Newtonian limit'' as the limit of statistical mechanical quantities as the speed of light $c\rightarrow\infty$ for a class of spacetimes whose timelike Killing fields have a certain dependence on $c$.  We prove convergence of relativistic Gibbs states evaluated on cylinder sets to their Newtonian counterparts for the ideal gas case, and define associated Newtonian pressures.  This Newtonian limit is illustrated in detail in Sect. 6 for a gas of particles following a circular orbit in Kerr space-time.  In Sect. 7, we prove uniqueness of the infinite volume Gibbs state in anti-de Sitter space for a class of equilibrium interaction potentials and calculate the infinite volume relativistic pressure for an ideal gas of test particles.  Concluding remarks are given in Sect. 8.

\section*{\normalsize{2. Particle systems on space-time manifolds}}

\noindent Let $(\mathcal{M}, g)$ be a smooth four-dimensional Lorentzian manifold.  The Levi-Civita connection is denoted by $\nabla$, and throughout we use the sign conventions of Misner, Thorne and Wheeler \cite{MTW}. Let $\sigma(\tau)$ be a a timelike path parameterized by proper time $\tau$ with unit tangent vector $e_{0}(\tau)$.  The path, $\sigma(\tau)$,  represents the worldline of an observer, who (in principle) may make measurements. \\  

\noindent Measurements in a gravitational field are most easily interpreted through the use of a system of locally inertial coordinates. For the observer  $\sigma(\tau)$, Fermi-Walker coordinates provide such a system. A Fermi-Walker coordinate frame is nonrotating in the sense of Newtonian mechanics and is realized physically as a system of gyroscopes \cite{MTW, walker, synge, CK2}.  Expressed in these coordinates, the metric along the path is Minkowskian, with first order corrections away from the path that depend only on the four-acceleration of the observer.  If the path is geodesic, the coordinates are commonly referred to as Fermi or Fermi normal coordinates, and the metric is Minkowskian to first order near the path with second order corrections due only to curvature of the space-time \cite{MTW}. A partial listing of the numerous applications of Fermi-Walker coordinates is given in \cite{KC3} .\\

\noindent A Fermi-Walker coordinate system for $\sigma$ may be constructed as follows. A vector field $\vec{v}$ in $\mathcal{M}$ is said to be Fermi-Walker transported along $\sigma$ if $\vec{v}$ satisfies the Fermi-Walker equations given in coordinate form by,

\begin{equation}
F_{e_{0}}(v^{\alpha})\equiv \nabla_{e_{0}}\;v^{\alpha}+\Omega^{{\alpha}}_{\;\,\beta}v^{\beta}=0\,\label{F1}.
\end{equation}
 
\noindent where $\Omega^{{\alpha}}_{\;\,\beta}=a^{\alpha}u_{\beta}- 
u^{\alpha} a_{\beta}$, and $a^{\alpha}$ is the four-acceleration along $\sigma$. A Fermi-Walker coordinate system along $\sigma$ is determined by an orthonormal tetrad of vectors, $e_{0}(\tau)$, $e_{1}(\tau), e_{2}(\tau), e_{3}(\tau)$ Fermi-Walker transported along $\sigma$. Fermi-Walker coordinates $x^{0}$, $x^{1}$, $x^{2}$, $x^{3}$ relative to this tetrad   are defined by,

\begin{equation}\label{F2}
\begin{split}
x^{0}\left (\exp_{\sigma(\tau)} (\lambda^{j}e_{j}(\tau)\right)&= \tau \\
x^{k}\left (\exp_{\sigma(\tau)} (\lambda^{j}e_{j}(\tau)\right)&= \lambda^{k}, 
\end{split} 
\end{equation}

\noindent where here and below, Greek indices run over $0,1,2,3$ and Latin over $1,2,3$. The exponential map, $\exp_{p}(\vec{v})$, denotes the evaluation at affine parameter $1$ of the geodesic starting at the point $p$ in the space-time, with initial derivative $\vec{v}$, and it is assumed that the $\lambda^{j}$ are sufficiently small so that the exponential maps in Eq.\eqref{F2} are defined.  From the theory of differential equations, a solution to the geodesic equations depends smoothly on its initial data so it follows from Eq.\eqref{F2} that Fermi-Walker coordinates are smooth. Moreover, it follows from \cite{oniell} that there exists a neighborhood of $\sigma$ on which the map $\chi =(x^{0}, x^{1}, x^{2}, x^{3})$ is a diffeomorphism onto an open set in $\mathbb{R}^{4}$ and hence a coordinate chart. General formulas in the form of Taylor expansions for coordinate transformations to and from  Fermi-Walker coordinates, are given in \cite{KC1}. \\

\noindent  Assume now that $\vec{K}$ is a timelike Killing vector field in a neighborhood of $\sigma$ and the tangent vector to $\sigma$ is $\vec{K}$, i.e., $e_{0} = \vec{K}$.  The vector field $\vec{K}$ is the infinitesimal generator of a local one-parameter group $\phi_{s}$ of diffeomorphisms. The function $\phi_{s}$ is the flow with tangent vector $\vec{K}$ at each point, and $\sigma(\tau) = \phi_{\tau}(0,0,0,0)$ (where $(0,0,0,0)$ is the origin in Fermi-Walker coordinates).\\

\noindent Following \cite{KC2}, define a diffeomorphism $\bar{\chi}^{-1}$ from a sufficiently small neighborhood of the origin in $\mathbb{R}^{4}$ to a neighborhood $U$ of $\sigma(0)$ in $\mathcal{M}$ by

\begin{equation} \label{k1}
\bar{\chi}^{-1}(\bar{x}^{0}, \bar{x}^{1}, \bar{x}^{2}, \bar{x}^{3}) \equiv \phi_{\bar{x}^{0}}(\chi^{-1}(0, \bar{x}^{1}, \bar{x}^{2}, \bar{x}^{3}))
\end{equation}

\noindent Then $\bar{\chi} =(\bar{x}^{0}, \bar{x}^{1}, \bar{x}^{2}, \bar{x}^{3})$ is a coordinate system on $U$ which, as in \cite{KC2}, we refer to as Fermi-Walker-Killing coordinates. It follows that,

\begin{eqnarray} \label{k2}
\begin{split}
(0, \bar{x}^{1}, \bar{x}^{2}, \bar{x}^{3}) &= (0, x^{1}, x^{2}, x^{3})\\
\frac{\partial}{\partial \bar{x}^{i}}\Bigr|_{\bar{x}^{0} =0}&=\frac{\partial}{\partial x^{i}}\Bigr|_{x^{0} =0}\\
\frac{\partial}{\partial \bar{x}^{0}} &= \vec{K}.
\end{split}
\end{eqnarray}

\noindent Thus, in Fermi-Walker-Killing coordinates, the time coordinate $\bar{x}^{0}$ is the parameter of the flow generated by the Killing field $\vec{K}$ with initial positions of the form $(0, x^{1}, x^{2}, x^{3})$ in Fermi-Walker coordinates.\\

\noindent In the case of Fermi-Walker coordinates we designate momentum form coordinates as $p_{\alpha}$ and in the case of Fermi-Walker-Killing coordinates, the momentum form coordinates will be designated as $\bar{p}_{\alpha}$.  Thus, coordinates of the cotangent bundle of $\mathcal{M}$ will be represented as $\{ x^{\alpha},p_{\beta}\}$ or $\{ \bar{x}^{\alpha},\bar{p}_{\beta}\}$.  The metric components in Fermi-Walker coordinates are designated as $g_{\alpha\beta}$ and in Fermi-Walker-Killing coordinates as $\bar{g}_{\alpha\beta}$. Note that $\bar{g}_{\alpha\beta}$ does not depend on $\bar{x}^{0}$ because of Eqs.\eqref{k2}.\\

\noindent The state of a single  particle
consists of its four space-time coordinates together  with its four-momentum coordinates
$\{ \bar{x}^{\alpha},\bar{p}_{\beta}\}$, but the one-particle Hamiltonian satisfies,

\begin{equation}
H=\frac{1}{2}\bar{g}^{\alpha\beta}\bar{p}_{\alpha}\bar{p}_{\beta}=-\frac{m^{2}c^{2}}{2},
\label{a24}
\end{equation}

\noindent where, for timelike geodesics, $m$ is the rest mass of the particle, and $c$ is the speed of light. Eq.\eqref{a24} just states the standard fact from relativity that the square of the ``norm'' of the four-velocity of a timelike particle is $-c^{2}$. An arbitrary choice of $\bar{p}_{1}, \bar{p}_{2}, \bar{p}_{3}$, determines $\bar{p}_{0}$ by,

\begin{equation}\label{a25}
\bar{p}_{0}=\frac{-\bar{g}^{0i}\bar{p}_{i}+\sqrt{(\bar{g}^{0j}\bar{g}^{0k}-\bar{g}^{00}\bar{g}^{jk})\bar{p}_{j}\bar{p}_{k}-\bar{g}^{00}m^{2}c^{2}}}{\bar{g}^{00}},
\end{equation}

\noindent so that Eq.\eqref{a24} holds.\\

\noindent Let $\varphi : \mathcal{M} \rightarrow \mathbb{R}$ by,

\begin{equation}\label{slice}
\varphi(x)=g(\exp_{\sigma(0)}^{-1}x,\, \sigma^{\prime}(0)).
\end{equation}

\noindent The space slice, $X$, at time coordinate $\tau=0=\bar{x}^{0}$, orthogonal to the observer's four-velocity (along $\sigma(\tau)$) is given by,

\begin{equation}\label{slice2}
X\equiv \varphi^{-1}(0).
\end{equation}

\noindent  The restriction $^{3}g$ of the space-time metric $g$ to $X$ makes $(X, ^{3}g)$ a Riemannian manifold, sometimes referred to as a Fermi surface or Landau submanifold, c.f., for example, \cite{bolos} and references therein.  

\begin{remark} The Fermi surface $X$ defines the collection of space-time points that are simultaneous with $\sigma(0)$ in a natural way relative to the observer $\sigma$. Hypersurfaces parallel to $X$ foliate a (possibly global) neighborhood of $\sigma$, and define a notion of simultaneity.  This foliation has been used to study geometrically defined relative velocities of distant objects and time dependent diameters of Robertson-Walker cosmologies \cite{Klein11, bolos2} 
\end{remark} 

\noindent Thus, using Eq.\eqref{a25}, the phase space for a single particle at time coordinate $\bar{x}^{0}=0$ is determined by the Fermi surface, $X$, together with the associated momentum coordinates, i.e., the cotangent bundle of the Fermi surface at fixed time coordinate $\bar{x}^{0}$. Let $\mathbb{P}$ be the sub bundle of the cotangent bundle of space-time with each  three dimensional fiber determined by (\ref{a25}). The proof of the following proposition is given in \cite{CK}.\\

\begin{proposition}\label{P1}
The volume 7-form 
$\tilde{\omega}$ on $\mathbb{P}$ given by, 
\begin{equation}
\tilde{\omega}=\frac{1}{\bar{p}^{0}}d\bar{x}^{0}\wedge d\bar{x}^{1}\wedge d\bar{x}^{2}\wedge d\bar{x}^{3}\wedge d\bar{p}_{1}\wedge d\bar{p}_{2}\wedge d\bar{p}_{3},\label{a26}
\end{equation}
is invariant  under all coordinate transformations.
\end{proposition}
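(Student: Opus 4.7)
My approach is to recognize $\tilde{\omega}$ as the Gelfand–Leray residue of the canonical Liouville volume form on $T^{*}\mathcal{M}$ with respect to the mass-shell function $H$. Both $T^{*}\mathcal{M}$ and $H$ are defined coordinate-independently: the canonical one-form $\theta = \bar{p}_{\alpha}\,d\bar{x}^{\alpha}$ on the cotangent bundle, its exterior derivative $\omega = d\bar{p}_{\alpha}\wedge d\bar{x}^{\alpha}$, and the Liouville 8-form
\begin{equation*}
\Omega \;\equiv\; \tfrac{1}{4!}\,\omega\wedge\omega\wedge\omega\wedge\omega \;=\; \pm\, d\bar{x}^{0}\wedge d\bar{x}^{1}\wedge d\bar{x}^{2}\wedge d\bar{x}^{3}\wedge d\bar{p}_{0}\wedge d\bar{p}_{1}\wedge d\bar{p}_{2}\wedge d\bar{p}_{3}
\end{equation*}
are chart-independent, and $H = \tfrac{1}{2}\bar{g}^{\alpha\beta}\bar{p}_{\alpha}\bar{p}_{\beta}$ is a scalar function so that $dH$ is also invariant. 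The strategy is then to show that the given 7-form satisfies $dH\wedge\tilde{\omega}=\Omega$ in every chart, and to invoke the standard fact that such a Leray form pulls back to a uniquely defined 7-form on the level set $\mathbb{P}=H^{-1}\bigl(-m^{2}c^{2}/2\bigr)$.

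The verification step is a direct computation. Using $\partial H/\partial \bar{p}_{\alpha} = \bar{p}^{\alpha}$, write
\begin{equation*}
dH \;=\; \bar{p}^{\alpha}\,d\bar{p}_{\alpha} \;+\; \tfrac{1}{2}\bigl(\partial_{\bar{x}^{\beta}}\bar{g}^{\mu\nu}\bigr)\bar{p}_{\mu}\bar{p}_{\nu}\,d\bar{x}^{\beta}.
\end{equation*}
When I wedge this with $\tilde{\omega}$, every term except $\bar{p}^{0}\,d\bar{p}_{0}$ is annihilated because the corresponding differential $d\bar{x}^{\beta}$ or $d\bar{p}_{i}$ $(i=1,2,3)$ already appears in $\tilde{\omega}$. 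Anti-commuting $d\bar{p}_{0}$ past the four $d\bar{x}^{\alpha}$ factors gives a sign $(-1)^{4}=+1$, so one obtains exactly $\Omega$. Hence $dH\wedge\tilde{\omega}=\Omega$ holds in any Fermi-Walker-Killing chart, and by the same computation it holds in every coordinate system on $T^{*}\mathcal{M}$ induced by a chart on $\mathcal{M}$.

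To conclude coordinate invariance, I appeal to the uniqueness of the Leray residue on $\mathbb{P}$: if $\mu$ and $\mu'$ are two 7-forms on $T^{*}\mathcal{M}$ with $dH\wedge\mu = dH\wedge\mu' = \Omega$, then $dH\wedge(\mu-\mu')=0$, so locally $\mu-\mu' = dH\wedge\nu$, and $\iota^{*}(\mu-\mu')=0$ where $\iota\colon\mathbb{P}\hookrightarrow T^{*}\mathcal{M}$ is the inclusion (because $\iota^{*}\,dH=0$). Thus the pullback $\iota^{*}\tilde{\omega}$ is completely determined by $\Omega$ and $dH$, both of which are chart-independent; therefore $\tilde{\omega}$ on $\mathbb{P}$ is invariant under all coordinate transformations. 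The only real bookkeeping obstacle is to confirm the sign of $\Omega$ under the identification $\frac{1}{4!}\omega^{4}\sim d\bar{x}\wedge d\bar{p}$ and to check that the branch of the square root chosen in \eqref{a25} (a single sheet of the mass shell) is preserved under admissible coordinate changes, which is immediate since the constraint $H=-m^{2}c^{2}/2$ is itself a scalar condition.
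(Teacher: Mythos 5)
Your argument is correct, but it is not the route the paper takes: the paper does not prove Proposition \ref{P1} at all, deferring instead to \cite{CK}, where the invariance is established by a direct change-of-variables computation --- one uses Eq.~\eqref{a25} to treat $\bar{p}_{0}$ as a function of $(\bar{x}^{\alpha},\bar{p}_{i})$ on the mass shell and checks that the Jacobian of the induced transformation of $(\bar{x}^{1},\bar{x}^{2},\bar{x}^{3},\bar{p}_{1},\bar{p}_{2},\bar{p}_{3})$ produces exactly the ratio of the $1/\bar{p}^{0}$ factors. Your Gelfand--Leray argument replaces that bookkeeping with structure: the identity $dH\wedge\tilde{\omega}=\Omega$, the chart-independence of the Liouville $8$-form $\Omega=\tfrac{1}{4!}\omega^{4}$ and of the scalar $H$, and the pointwise uniqueness of the restriction to the level set $\mathbb{P}=H^{-1}(-m^{2}c^{2}/2)$. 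The computation $dH\wedge\tilde{\omega}=\bar{p}^{0}\,d\bar{p}_{0}\wedge\tilde{\omega}=\Omega$ is right (the $d\bar{x}^{\beta}$ and $d\bar{p}_{i}$ terms of $dH$ are killed by repetition), and your uniqueness step is sound since $\iota^{*}dH=0$. What your approach buys is conceptual clarity and automatic generality (any chart on $\mathcal{M}$ and its canonical lift to $T^{*}\mathcal{M}$, with no Jacobian manipulation); what the direct computation buys is elementarity and an explicit formula for how $\bar{p}^{0}$ transforms. Two small points you should make explicit: the uniqueness-of-the-residue step needs $dH\neq 0$ on $\mathbb{P}$ (true here since $\bar{p}^{\alpha}=\partial H/\partial\bar{p}_{\alpha}$ is a nonzero timelike vector on the mass shell, which also guarantees $\bar{p}^{0}\neq 0$ so that $\tilde{\omega}$ is defined), and ``all coordinate transformations'' must be read as those induced by charts on $\mathcal{M}$ via the canonical lift --- which is the only sense in which the proposition can hold, and the sense in which your proof establishes it.
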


\noindent As discussed in \cite{KC2}, phase space for a single particle is determined by
the Fermi surface, at fixed time coordinate, orthogonal to the observer's four-velocity (along $\sigma(\tau)$), along with the associated momentum
coordinates, i.e., the cotangent bundle of the Fermi surface at fixed time coordinate $\bar{x}^{0}$. The appropriate volume form is given by the interior product  
$\textbf{\textit{i}}(m\partial/\partial \tau)\tilde{\omega}$ of the four-momentum vector $m\partial/\partial\tau=\bar{p}^{\alpha}\partial/\partial\bar{x}^{\alpha}$ 
with $\tilde{\omega}$. Then,

\begin{equation}
m\textbf{\textit{i}}(\partial/\partial \tau)\tilde{\omega} = d\bar{x}^{1}\wedge
d\bar{x}^{2}\wedge d\bar{x}^{3}\wedge  d\bar{p}_{1}\wedge d\bar{p}_{2}\wedge d\bar{p}_{3}+(d\bar{x}^{0}\wedge\tilde{\psi}),\label{new27'}
\end{equation}

\noindent where $\tilde{\psi}$ is a five-form.  Since $d\bar{x}^{0}=0$ on vectors on phase space (with fixed time coordinate $\bar{x}^{0}$), the restriction of $m\textbf{\textit{i}}(\partial/\partial \tau)\tilde{\omega}$ to the one-particle (six-dimensional) phase space is,

\begin{equation}
m\textbf{\textit{i}}(\partial/\partial \tau)\tilde{\omega}=d\bar{x}^{1}\wedge d\bar{x}^{2}\wedge d\bar{x}^{3}\wedge  d\bar{p}_{1}\wedge d\bar{p}_{2}\wedge d\bar{p}_{3}.\label{new272}
\end{equation}

\noindent It follows from Proposition \ref{P1}
that the 6-form given by Eq.\eqref{new272} is invariant under coordinate changes of the space 
variables with $\bar{x}^{0}$ fixed. Thus,

\begin{equation}
\begin{split}\label{new27}
d\mathbf{\bar{x}}d\mathbf{\bar{p}}&\equiv
d\bar{x}^{1}\wedge d\bar{x}^{2}\wedge d\bar{x}^{3}\wedge  d\bar{p}_{1}\wedge d\bar{p}_{2}\wedge d\bar{p}_{3}\\
&=dx^{1}\wedge dx^{2}\wedge dx^{3}\wedge  dp_{1}\wedge dp_{2}\wedge dp_{3}\\
&\equiv dx^{1}  dx^{2}  dx^{3}   dp_{1}  dp_{2}  dp_{3}\\
&\equiv d\mathbf{x}d\mathbf{p}
\end{split}
\end{equation}

\noindent Physically this means that the
calculations that follow are independent of the choice of the orthonormal triad $e_{1}(0), e_{2}(0), e_{3}(0)$.\\

\noindent In what follows, we consider a particle system in contact with 
a heat bath, and consequently in thermal equilibrium. Following the usual convention, let $\beta=1/kT$, where $T$ is the temperature of the gas in some volume $\Lambda$, and $k$ is Boltzmann's constant.\\

\begin{remark}\label{temperature} An alternative convention for temperature may be used in what follows, consistent with an idea of Tolman's in the context of relativistic thermodynamics \cite{tolman}. At any location within the gas, one may identify an observer with four-velocity $\vec{K}/\|\vec{K}\|$. The energy of a particle with momentum $p$ measured by that observer is  $-K^{\alpha} p_{\alpha}/\|\vec{K}\|$.  Then, defining a position dependent equilibrium temperature $\tilde{T}$ by $\tilde{T}= T/\|\vec{K}\|= T/\sqrt{-\bar{g}_{00}}$ coincides with Tolman's formula and results in expressions equivalent to what we find in the sequel.
\end{remark}

\noindent It was shown in \cite{KC2} that the energy of a single particle may be naturally defined as $-K^{\alpha} p_{\alpha}-mc^{2}$, and that for a suitable volume $\Lambda$ (see below) the one particle partition function is given by,

\begin{equation}
\label{i1}
\tilde{Q}_{1}(\beta,\Lambda)\equiv\frac{1}{(2\pi\hbar)^{3}} \int_{ \Lambda}\int_{\mathbb{R}^{3}}
e^{\beta K^{\alpha}p_{\alpha}+\beta mc^{2}}\;d\mathbf{x}d\mathbf{p}=\frac{4\pi (mc)^{3}}{(2\pi\hbar)^{3}}e^{\beta mc^{2}} \int_{ \Lambda}\frac{K_{2}(\gamma)}{\gamma}d\mathbf{x},
\end{equation}

\noindent where $K_{2}(\gamma)$ is the modified Bessel function of the second kind evaluated at $\gamma$, and,

\begin{equation}\label{gamma}
\gamma\equiv\gamma(x^{1}, x^{2}, x^{3})\equiv\|\vec{K}\|\beta mc. 
\end{equation}

\begin{remark}\label{lie}
It is easy to verify that, 

\begin{equation}\label{Lie}
\mathcal{L}_{\partial/\partial\bar{x}^{0}}\left(e^{\beta(\bar{K}^{\alpha} \bar{p}_{\alpha}+mc^{2})}d\mathbf{\bar{x}}d\mathbf{\bar{p}}\right)=\mathcal{L}_{K}\left(e^{\beta(\bar{K}^{\alpha} \bar{p}_{\alpha}+mc^{2})}d\mathbf{\bar{x}}d\mathbf{\bar{p}}\right)=0,
\end{equation}

\noindent where $\mathcal{L}$ denotes Lie derivative. Analogous to the development of nonrelativistic statistical mechanics, this invariance together with Eq.\eqref{new27} is a partial justification for our choice of phase space measure.
\end{remark}

\noindent We take the phase space volume form for $n$ particles to be the product of $n$ copies of Eq.\eqref{new27}, one copy for each particle, which we denote as, 

\begin{equation}
\label{i1b}
d\mathbf{x}^{n}d\mathbf{p}^{n}=dx^{1}_{1}dx^{2}_{1}dx^{3}_{1}dp_{11}dp_{12}dp_{13}\ldots
dx^{1}_{n}dx^{2}_{n}dx^{3}_{n}dp_{n1}dp_{n2}dp_{n3}.
\end{equation}

\noindent It was shown in \cite{KC2} that this measure satisfies a Liouville theorem. We now assume that the $n$-particle \emph{equilibrium} distribution is determined by the following partition function:

\begin{equation}
\begin{split}\label{canonical}
\tilde{Q}_{n}(\beta,\Lambda, s)&\equiv\frac{e^{n\beta mc^{2}}}{(2\pi\hbar)^{3n}} \int_{\Lambda^{n}}\int_{\mathbb{R}^{3n}}e^{\beta (\sum_{i=1}^{n}K^{\alpha}p_{i\alpha}-\tilde{V}_{\Lambda}(\mathbf{x^{n}}|s))}d\mathbf{x}^{n}d\mathbf{p}^{n}\\
&=\left[\frac{4\pi (mc)^{3}e^{\beta mc^{2}}}{(2\pi\hbar)^{3}}\right]^{n} \int_{ \Lambda^{n}}e^{-\beta \tilde{V}_{\Lambda}(\mathbf{x^{n}}|s)}\left(\prod_{i=1}^{n}\frac{K_{2}(\gamma_{i})}{\gamma_{i}}\right)d\mathbf{x^{n}}
\end{split}
\end{equation}

\noindent where  $s$ denotes a configuration of particles outside of $\Lambda$ that influences the equilibrium configuration of particles $\mathbf{x^{n}}$ in $\Lambda$ ($s$ can be the empty configuration), $\tilde{V}_{\Lambda}(\mathbf{x^{n}}|s)\equiv \tilde{V}_{\Lambda}(x^{1}_{1},x^{2}_{1},x^{3}_{1},\ldots x^{1}_{n},x^{2}_{n},x^{3}_{n}|s)$ is a many body potential energy function whose form and restrictions are given in the following section, and $\gamma_{i}\equiv\gamma(x^{1}_{i},x^{2}_{i},x^{3}_{i})$. We note that this assumption excludes interaction functions of both position and momentum coordinates, a significant restriction.  However, a similar assumption for the case of special relativity was made in \cite{special}. We emphasize that we are not asserting that potential energy functions determine (via a Hamiltonian) the dynamics of general relativistic particle systems; instead, we assume only that equilibrium behavior is governed by expressions of the form of Eq.\eqref{canonical}.

\section*{\normalsize{3. Grand canonical ensemble on the Fermi space slice}}

\noindent Because the integrals in Eq.\eqref{canonical} factor as a product of configuration integrals and momentum integrals (which can be evaluated explicity), we focus on the former.  To proceed further, it is convenient to work with the grand canonical ensemble.   A grand canonical formalism for particle systems on Riemannian manifolds is described in \cite{rockner, albeverio, kuna} and references therein, briefly summarized here in a way that is useful for our purposes.\\  

\noindent Let $\mathcal{B}(X)$ denote the $\sigma$-algebra of Borel sets on $X$ and $\mathcal{B}_{c}(X)$ represent the collection of sets in $\mathcal{B}(X)$ with compact closure. Let $\Gamma_{X}$ be the set of particle configurations on $X$, i.e.,

\begin{equation}\label{configurations}
\Gamma_{X}= \{x \subset X: |x
\cap K|<\infty \,\,\text{for any compact}\,\,K \subset X\},
\end{equation}

\noindent where $|A|$ denotes cardinality of the set $A$. For  $\Lambda\in\mathcal{B}_{c}(X)$, define $N_{\Lambda}: \Gamma_{X}\rightarrow\mathbb{N}_{0}$, the set of natural numbers (including zero) by,

\begin{equation}\label{rv}
N_{\Lambda}(x) = |x\cap\Lambda|.
\end{equation}

\noindent Let $\mathcal{B}(\Gamma_{X})$ be the $\sigma$-algebra generated by all such functions.  For $\Lambda\subset X$, let $x_{\Lambda}=x\cap \Lambda$ and,

\begin{equation}\label{configurations2}
\Gamma_{\Lambda}= \{x \in \Gamma_{X}: x_
{X/\Lambda}=\varnothing\}.
\end{equation}

\noindent For $n=0,1,2,...$, let,

\begin{equation}\label{configurations3}
\Gamma_{\Lambda}^{(n)}= \{x \in \Gamma_{\Lambda}: |x|=n\}.
\end{equation}

\noindent There is a natural bijection,

\begin{equation}\label{bijection}
\tilde{\Lambda}^{n}/S_{n}\rightarrow\Gamma_{\Lambda}^{(n)},
\end{equation}

\noindent where $S_{n}$ is the permutation group over $\{1,...,n\}$ and,

\begin{equation}\label{tilde}
\tilde{\Lambda}^{n} \equiv \{(x_{1}, ..., x_{n}) \in \Lambda^{n} : x_{i} \neq x_{j}\, \,\text{if}\,\, i\neq j\}.
\end{equation}

\noindent If $\Lambda\in\mathcal{B}_{c}(X)$ is open, this correspondence determines a locally compact, metrizable topology on $\Gamma_{\Lambda}^{(n)}$ and then, 

\begin{equation}
\Gamma_{\Lambda}=\bigcup_{n=0}^{\infty}\Gamma_{\Lambda}^{(n)}
\end{equation}

\noindent is equipped with the topology of disjoint union and hence a Borel $\sigma$-algebra, $\mathcal{B}(\Gamma_{\Lambda})$.  Then $(\Gamma_{X}, \mathcal{B}(\Gamma_{X}))$ is the projective limit of the measurable spaces $(\Gamma_{\Lambda}, \mathcal{B}(\Gamma_{\Lambda}))$ as $\Lambda$ increases to $X$, \cite{rockner}.\\ 

\noindent For $\Lambda\in\mathcal{B}_{c}(X)$, let $\mathcal{B}_{\Lambda}=\sigma\{N_{\Lambda'}: \Lambda'\subset\Lambda,\,\,\Lambda'\in\mathcal{B}_{c}(X)\}$. The $\sigma$-algebras $\mathcal{B}(\Gamma_{\Lambda})$ and $\mathcal{B}_{\Lambda}$ are $\sigma$-isomorphic.

\begin{definition}\label{cylinder}
Let $\Lambda\in\mathcal{B}_{c}(X)$.  A set $A\in\mathcal{B}_{\Lambda}\subset\mathcal{B}(\Gamma_{X})$ is called a cylinder set.
\end{definition}

\begin{remark}\label{cylinderfunction}
If $A$ is a cylinder set, let  $A_{\Lambda}  =\{x_{\Lambda}:x \in A \}$. Then $1_A(x)=1_{A_{\Lambda}}(x_{\Lambda})$. To see this,  let $\mathcal {C}=\{ A \subset \Gamma _X: 1_A(x)=1_A(x_{\Lambda} \cup s_{X/\Lambda}) \mbox{ for all } s \in \Gamma _X \} $.
 Then $1_A(x)=1_{A_{\Lambda}}(x_{\Lambda})$ holds if and only if $A \in \mathcal {C}$. Let $\mathcal {D}$ be the class of the sets of the form $\{x: N_{\Lambda _1}(x)=n_1, N_{\Lambda _2}(x)=n_2,...,N_{\Lambda _k}(x)=n_k \} $, for some nonnegative integers $k, n_1, n_2, ..., n_k $, and $\Lambda _1, \Lambda _2, ..., \Lambda _k \subset \Lambda$.  Then $\mathcal {D}$ is closed under intersection and $\mathcal {D} \subset \mathcal {C}$.  Moreover, $\mathcal {C}$ has the following properties (i) $ \Gamma _X \in \mathcal {C}$, (ii) if $A, B \in \mathcal {C}$ and $A \subset B$ then $B/A \in \mathcal {C}$ and (iii) if $A_n \in \mathcal {C}$ and $A_n \uparrow A$, then $A \in \mathcal {C}$.  By Dynkin's Monotone Class Theorem, $\mathcal {C}$ contains  $\sigma (\mathcal {D})=\mathcal{B}_{\Lambda}$.

\end{remark}

\noindent We next define a Poisson measure on $X$, based on the measure $d\mathbf{x}$, determined by Eq.\eqref{new27}. Let $\Lambda\in\mathcal{B}_{c}(X)$ be open and let $s_{\Lambda}^{n}:\tilde{\Lambda}^{n}\rightarrow\Gamma_{\Lambda}^{(n)}$ by $s_{\Lambda}^{n}((x_{1}, ..., x_{n}))=\{x_{1}, ..., x_{n}\}$.  Then $d\mathbf{x}^{n}\circ (s_{\Lambda}^{n})^{-1}$ is a measure on $\Gamma_{\Lambda}^{(n)}$ which we again denote by $d\mathbf{x}^{n}$ (distinguished from Eq.\eqref{i1b} by context). For the case $n=0$, $d\mathbf{x}^{0}(\varnothing)\equiv 1$.  For nonnegative activity $z$, and $\Lambda\in\mathcal{B}_{c}(X)$, we can define (un normalized) Poisson measure on $\Gamma_{\Lambda}$ by,

\begin{equation} \label{poisson1}
\nu_{\Lambda} (d\mathbf{x})= \sum_{n=0}^{\infty} \frac{z^{n}}{n!} d\mathbf{x}^{n}.
\end{equation}

\noindent For each $\Lambda\in\mathcal{B}_{c}(X)$, let $V_{\Lambda}$ be a $\mathcal{B}(\Gamma_{\Lambda})$ measurable interaction potential of the form,

\begin{equation}\label{V}
V_{\Lambda}(x)=\sum_{n=0}^{|x|}\, \sum _{y \subset x, |y|=n}\phi_{n}(y),
\end{equation}

\noindent for finite configuration $x$, where $\phi_{n}: \Gamma_{\Lambda}^{(n)}\rightarrow\mathbb{R}\cup\{+\infty\}$, $\inf_{n,y}\phi(y)>-\infty$. For the case $n=0$, we define $\phi_{0}(\varnothing)=V_{\Lambda}(\varnothing)=|\Lambda|\rho_{\text{vac}}$, where $|\Lambda|$ is the volume of $\Lambda$  determined by the Riemannian metric $^{3}g$ on $X$ and $\rho_{\text{vac}}$ is the energy density of the vacuum given by

\begin{equation}\label{vacuumenergy}
\rho_{\text{vac}} =\frac{\lambda c^{4}}{8\pi G},
\end{equation}

\noindent where $\lambda$ is the cosmological constant (possibly zero) and $G$ is Newton's universal gravitational constant.

\begin{definition}\label{finiterange}
An interaction potential of the form given by Eq.\eqref{V} has finite range if for each  $n\geq2$, there exists $R>0$ such that $\phi_{n}(x_{1}, ..., x_{n})= 0$ whenever the proper distance (as determined by the Riemannian metric $^{3}g$) from $x_{i}$ to $x_{j}$ exceeds $R$ for some $i$ and $j$. 
\end{definition}

\noindent Though not essential, we assume henceforth that each $\phi_{n}$ has finite range  for $n\geqslant3$, so that only the two-body interactions might have infinite range.  In order to develop notational consistency with Eq.\eqref{canonical}, we also define a temperature dependent, one-body potential (or equivalently, a temperature independent external field) by,

\begin{equation}\label{1body}
e^{- \beta \phi _1 (x^1, x^2, x^3)}=\frac {K_2(\gamma (x^1, x^2, x^3))}{\gamma (x^1, x^2, x^3)}.
\end{equation}

\noindent For configurations $x\subset \Lambda$ and $s \subset X$ define,

\begin{equation}\label{V2}
V_{\Lambda}(x|s)=|\Lambda|\rho_{\text{vac}}+\sum_{n=1}^{\infty}\, \sum _{\substack{y \subset x\cup (s\cap X/\Lambda)\\
y\cap x\neq\varnothing\\|y|=n}}
\phi_{n}(y),
\end{equation}

\noindent if the right hand side converges absolutely, and $+\infty$ otherwise. For $\Lambda\in\mathcal{B}_{c}(X)$ and a boundary configuration $s$, define the partition function by,

\begin{equation}\label{partition}
Z_{\Lambda}(s)=Z_{\Lambda}(s, \beta, z)= \int_{\Gamma_{\Lambda}}e^{-\beta V_{\Lambda}(x|s)}\nu_{\Lambda} (dx),
\end{equation}

\noindent where we assume that a factor $4\pi (mc)^{3}e^{\beta mc^{2}}/(2\pi\hbar)^{3}$ is included in the activity $z$ through Eq.\eqref{poisson1} so that,

\begin{equation}\label{activity}
z= \frac{4\pi (mc)^{3}}{(2\pi\hbar)^{3}}e^{\beta (mc^{2}+\mu)},
\end{equation}

\noindent where the parameter $\mu$ is chemical potential. With this identification, Eq.\eqref{partition} may be rewritten as,

\begin{equation}
Z_{\Lambda}(s, \beta, z)= \sum_{n=0}^{\infty}\frac{z^{n}}{n!}Q_{n}(\beta,\Lambda, s),
\end{equation}

\noindent where $Q_{0}(\beta,\Lambda, s)=\exp (-\beta |\Lambda |\rho_{\text{vac}})$ and for $n\geq 1$,

\begin{equation}\label{Q}
Q_{n}(\beta,\Lambda, s)=\int_{\Lambda^{n}}e^{-\beta V_{\Lambda}(x|s)}d\mathbf{x^{n}}.
\end{equation}

\noindent Following the grand canonical statistical mechanical prescription, the connection to thermodynamics is given by the following definitions.

\begin{definition}\label{finitepressure} The finite volume pressure $P_{\Lambda}(\beta, z)$ for $\Lambda\in\mathcal{B}_{c}(X)$ is given by,

\begin{equation}\label{pressure}
\beta P_{\Lambda}(\beta, z)=\frac{\log Z_{\Lambda}(\varnothing)}{|\Lambda|},
\end{equation}
\noindent where the volume $|\Lambda|$ determined by the Riemannian metric $^{3}g$ on $X$.
\end{definition}

\begin{definition}\label{defpressure} For the case that $X$ is unbounded, we define the infinite volume pressure $P=P(\beta, z)$ by,

\begin{equation}\label{pressure}
\beta P(\beta, z)=\lim_{n\rightarrow\infty}\frac{\log Z_{\Lambda_{n}}(\varnothing)}{|\Lambda_{n}|},
\end{equation}
\noindent provided the limit exists. Here $\Lambda_{n}$ is the ball of radius $n$ centered at the origin of Fermi-Walker-Killing coordinates.
\end{definition}

\noindent Following \cite{kuna}, Eqs.\eqref{poisson1} -- \eqref{partition} determine a local specification $\Pi$.  For any $s\in\Gamma_{X}$, $\Lambda\in\mathcal{B}_{c}(X)$, and $A\in\mathcal{B}(X)$, let, 

\begin{equation}\label{specification}
\Pi_{\Lambda}(A,s)=\frac{1_{Z_{\Lambda}(s)<\infty}(s)}{Z_{\Lambda}(s)}\int_{\Gamma_{\Lambda}}1_{A}(x\cup s_{X/\Lambda})e^{-\beta V_{\Lambda}(x|s)}\nu_{\Lambda} (dx).
\end{equation}

\noindent We define a Gibbs state via the DLR equations \cite{kuna}.

\begin{definition}\label{gibbs}
Assume that $X$ is unbounded. A probability measure $\mu$ on $(\Gamma_{X}, \mathcal{B}(\Gamma_{X}))$ is a Gibbs state if
$$\int_{\Gamma_{X}}\Pi_{\Lambda}(A,s)\mu(ds)=\mu(A)$$
\noindent for all $A\in\mathcal{B}(\Gamma_{X})$ and $\Lambda\in\mathcal{B}_{c}(X)$. 
\end{definition}

\begin{remark}\label{existence}
For the case of pair potentials, conditions have been given for the existence of Gibbs states in \cite{kuna}; see also \cite{kondratiev}.  Note also that the specification given by Eq.\eqref{specification}, and hence any Gibbs state, is invariant under changes of nonzero energy density of the vacuum $\phi_{0}$ (see Eq.\eqref{vacuumenergy}).  However, the pressure is not. 
\end{remark}

\section*{\normalsize{4. Ideal Gas Laws}}

\noindent  In the context of general relativity, we define an ideal gas to be the ensemble determined by a potential $V$ satisfying $\phi_{n} \equiv 0$ for all $n\geq2$. Note, however, that $\phi_{1}\neq0$ (see Eq.\eqref{1body}), and $\phi_{0}$, which is proportional to the vacuum energy given by Eq.\eqref{vacuumenergy}, is not necessarily zero. Using the definitions of the preceding sections, a derivation of a finite volume general relativistic ideal gas law for an observer following a path tangent to a timelike Killing vector, using the grandcanoncial ensemble is straightforward.  By Def. \ref{finitepressure},

\begin{equation}\label{idealgas1}
P_{\Lambda}(\beta, z)= -\rho_{\text{vac}}+\frac{z}{\beta|\Lambda|}\int_{ \Lambda}\frac{K_{2}(\gamma(x))}{\gamma(x)}d\mathbf{x}.
\end{equation}

\noindent The expected number of particles, $\langle N_{\Lambda} \rangle$, in $\Lambda$ is given by,

\begin{equation}\label{idealgas2}
\langle N_{\Lambda} \rangle= \frac{1}{Z_{\Lambda}(\varnothing)} \sum_{n=0}^{\infty}n\frac{z^{n}}{n!}Q_{n}(\beta,\Lambda, \varnothing)=z\frac{\partial}{\partial z} \log Z_{\Lambda}(\varnothing)=z\int_{ \Lambda}\frac{K_{2}(\gamma(x))}{\gamma(x)}d\mathbf{x}.
\end{equation}

\noindent Combining Eqs. \eqref{idealgas1} and \eqref{idealgas2} then yields,

\begin{equation}\label{idealgas3}
P_{\Lambda}= -\rho_{\text{vac}}+\frac{\langle N_{\Lambda} \rangle}{|\Lambda|}kT = -\rho_{\text{vac}}+\rho_{\Lambda}kT,
\end{equation}

\noindent where $\rho_{\Lambda}$ is the number density of particles in $\Lambda$.  In the case that $\rho_{\text{vac}}=0$, this is the standard form of the ideal gas law, with the volume $|\Lambda|$ determined by the induced metric $^{3}g$ on the Fermi surface for the observer.  The result is well-known for the case of Minkowski space-time.  As in the classical (Newtonian) case, it is easily shown (using Eq.\eqref{idealgas2}) that the particle number is Poisson distributed.  Thus, if $\text{Pr}_{\Lambda}(N)$ represents the probability that $N$ particles are in $\Lambda$, then,

\begin{equation}\label{Poisson}
\text{Pr}_{\Lambda}(N)= e^{-\langle N_{\Lambda} \rangle}\frac{\,\,\langle N_{\Lambda} \rangle^{N}}{N!}.
\end{equation}

\noindent It follows from Eqs.\eqref{idealgas2}, \eqref{idealgas3}, and \eqref{Poisson} that the thermodynamic behavior of an ideal gas, and in particular the particle density, is determined by the metric of the space-time through the norm of the timelike Killing field, $||\vec{K}||$ (see Eq.\eqref{gamma}), and the volume $|\Lambda|$ (determined by $^{3}g$). Thus in principle, the geometry of space-time may be studied through the behavior of an ideal gas within it. \\

\noindent By way of illustration, we compare the thermodynamic behavior of an ideal gas in de Sitter space and the Einstein static universe. It was shown in \cite{KC3} that the metric for the Einstein static universes in polar form using Fermi coordinates for any geodesic observer (with path $\sigma(t)=(t,0,0,0)$ in these coordinates) is given by,

\begin{equation}\label{Einsteinpolar}
\text{Einstein static:}\,\,\,\,ds^{2}=-c^{2}dt^{2}+d\rho^{2}+R^{2}\sin^{2}\left(\frac{\rho}{R}\right)(d\theta^{2} + \sin^{2} \theta \,d\phi^{2}),
\end{equation}

\noindent where $\rho< \pi R$ and $R = 1/\sqrt{\lambda}$ is the radius of the universe.  It was also shown in \cite{KC3} that the metric for de Sitter space-time, in polar form of Fermi coordinates for any geodesic observer is given by,

\begin{equation}\label{deSitterpolar}
\text{de Sitter:}\,\,\,\,ds^{2}= -c^{2}\cos^{2}\left( a\rho \right) dt^{2}+d\rho^{2}+\frac{\sin^{2}(a\rho)}{a^{2}}(d\theta^{2} + \sin^{2} \theta \,d\phi^{2}),
\end{equation}

\noindent where $a =\sqrt{ \lambda/3}$ and $\rho<\pi/2a$ so that the coordinates cover the space-time up to the cosmological horizon of the observer. For both space-times, $\vec{K}=\partial/\partial t$ is a timelike Killing vector field tangent to the respective timelike geodesics of the observers.  Then from Eq.\eqref{gamma}, we find for these space-times,

\begin{equation}\label{gamma2}
\text{Einstein static:}\,\,\,\,\gamma(x^{1}, x^{2}, x^{3})=\beta mc^{2}, 
\end{equation}

\noindent as distinguished from, 

\begin{equation}\label{gamma3}
\text{de Sitter:}\,\,\,\,\gamma(x^{1}, x^{2}, x^{3})=\beta mc^{2}\cos\left( a\rho \right), 
\end{equation}\

\noindent where in terms of Cartesian Fermi coordinates, $\rho=\sqrt{(x^{1})^{2}+(x^{2})^{2}+(x^{3})^{2}}$.  The expected number of particles for each space-time is determined by Eq.\eqref{idealgas2}, and respective volumes are calculated from Eqs.\eqref{Einsteinpolar} and \eqref{deSitterpolar} after setting $dt=0$. For volumes determined by a given set of Fermi coordinates for the two space-times, Eq.\eqref{idealgas3} and \eqref{vacuumenergy} yield,

\begin{equation}\label{idealgas4}
P_{\Lambda} = P= -\frac{\lambda c^{4}}{8\pi G}+\rho_{\text{mass}}\frac{kT}{m},
\end{equation}

\noindent where $\rho_{\text{mass}} = m\rho_{\Lambda}$ is the mass density of the ideal gas when $m$ is the mass of each gas particle, and these densities for the two space-times depend on the respective values of $\gamma$, and will in general be different.  So the relativistic thermodynamic behavior of an ideal gas distinguishes the two space-times for any common value of $\lambda$.\\

\begin{remark}\label{static} It is interesting to note that the equation of state for the perfect fluid of the Einstein static universe is $P=0$, with zero temperature, i.e., the fluid is ``dust,'' and the mass density $\rho$  determined by the Einstein field equations is,

\begin{equation}\label{massdensity}
\rho =\frac{\lambda c^{2}}{4\pi G}.
\end{equation}

\noindent If we set $\rho_{\text{mass}}$ for the ideal gas of test particles equal to $\rho$ of the perfect fluid and require $P=0$, the temperature is given by,

\begin{equation}\label{temp}
kT = \frac{mc^{2}}{2},
\end{equation}

\noindent and approaches zero only as $m\rightarrow0$. Thus, the thermodynamic behavior of the ideal gas models that of the perfect fluid (i.e. dust) only in the limiting case that mass of each test particle of the gas goes to zero.\\
\end{remark}

\section*{\normalsize{5. Newtonian limits}}

\noindent For a timelike path $\sigma$ tangent to a Killing vector $\vec{K}$ in a given space-time, we have defined through Eqs.\eqref{idealgas1} -- \eqref{Poisson} the relativistic grand canonical pressure  of an ideal gas.  Under appropriate conditions, there is associated with such a statistical mechanical system an analogous Newtonian (i.e., nonrelativistic) expression for the pressure of the gas.  It should be noted that the term ``ideal gas'' is somewhat misleading in the context of general relativity. This is because a volume of gas subject to no forces is still affected by the curvature of space-time, and this corresponds to a Newtonian gas subject to gravitational, tidal, and in some instances ``centrifugal forces,'' depending on the path of the observer.  This associated Newtonian partition function and finite volume Gibbs state for an ideal gas is obtained as a limit as $c \rightarrow \infty$.\\

\begin{definition}\label{alpha}
For a timelike Killing field $\vec{K}$ tangent to the path $\sigma$ parameterized by proper time $\tau$, define a dimensionless function of the space coordinates on the Fermi surface, $\alpha= \alpha (x^{1}, x^{2}, x^{3})$,  by $\|\vec{K}\|=\alpha c$.
\end{definition}

\noindent Throughout this section we let, $y= 1/c$ so that the Newtonian limit is obtained as $y \rightarrow 0^{+}$, and we assume that $\alpha$ can be extended to a smooth function of $y, x^{1}, x^{2}, x^{3}$ including at $y =0$ (here and below we suppress the dependence of $\alpha$ on $x^{1}, x^{2}, x^{3}$).  We refer to this function as $\alpha (y)$ and further assume that $\alpha(0)=1$ and $\alpha'(0)=0$ (see the following section).  Then,

\begin{equation} \label{alpha2}
\alpha(y)= 1+\frac{ \alpha''(0)}{2} y^{2} + \frac{\alpha'''(y_{0})}{6} y^{3}= 1+\frac{1}{2} \alpha''(0) y^{2} + O(y^{3}),
\end{equation}

\noindent where $0<y_{0}<y$. Note that $\alpha''(0)$ is a function of $x^{1}, x^{2},x^{3}$. Below we will identify $\frac{1}{2}m\alpha''(0)$ as the Newtonian, nonrelativistic potential energy $U(x^{1}, x^{2},x^{3})$ of a test particle of mass $m$ with coordinates $(0,x^{1}, x^{2},x^{3})$ due to the gravitational field. This potential energy function $U(x^{1},x^{2},x^{3})$ is normalized (by an additive constant) so that $U(0,0,0) = 0$, i.e., the potential energy on the timelike path $\sigma$ is zero.  This is because $\vec{K}$ is the four velocity on $\sigma$ so $\alpha(y) =\|\vec{K}\|/c$ is identically $1$ as a function of $y=1/c$ on  $\sigma$, forcing $\alpha''(0) = 0$ there.

\begin{definition} \label{Newton}  Let $\Lambda\in\mathcal{B}_{c}(X)$. For a particle system following a timelike path $\sigma$, we define the following Newtonian statistical mechanical analogs.

\begin{enumerate}
\item[(a)] The Newtonian activity paramater, $z_{\text{Newt}}$ is given by,

\begin{equation}\label{activityN}
z_{\text{Newt}}=\left(\frac{m}{2\pi\hbar^{2}\beta}\right)^{\frac{3}{2}}e^{\beta\mu},
\end{equation}
\noindent where $\mu$ is the chemical potential (as in Eq.\eqref{activity}). 

\item [(b)]   $Z_{\text{Newt}}(\Lambda)$ denotes the partition function for a non relativistic ideal gas subject to an external field with potential energy function $\frac{1}{2} m\alpha''(0)$ and activity $z_{\text{Newt}}$ so that,

\begin{equation}
\begin{split}\label{znewton}
Z_{\text{Newt}}(\Lambda)=& \sum_{n=0}^{\infty}\frac{z_{\text{Newt}}^{n}}{n!}\left[\int_{\Lambda}e^{-\frac{1}{2}\beta m\alpha''(0)}\;d\mathbf{x}\right]^{n}\\
=&\exp\left[ z_{\text{Newt}}\int_{\Lambda}e^{-\frac{1}{2}\beta m\alpha''(0)}\;d\mathbf{x}\right]
\end{split}
\end{equation}
\item[(c)] Let $\Pi_{\Lambda}^{\text Newt}(A,s)$ denote the specification associated with the one-body Newtonian potential $\frac{1}{2}\beta m\alpha''(0)$ and partition function $Z_{\text{Newt}}(\Lambda)$ given by Eq.\eqref{specification}.
\end{enumerate}

\end{definition}

\begin{theorem}\label{Newtonianlimit} 
Let $\sigma$ be a timelike path with $\alpha$ satisfying Eq.\eqref{alpha2}. Assume that $\Lambda\in\mathcal{B}_{c}(X)$. Then
\begin{enumerate}
\item [(a)] If $Z_{\Lambda}$ is the partition function for an ideal gas for an observer in a space-time with $\rho_{\text{vac}}=0$ (i.e. with zero cosmological constant), then,

\begin{equation}\label{partitionlimit}
 \lim_{\,\,c \rightarrow\infty}Z_{\Lambda}=Z_{\text{Newt}}(\Lambda),
\end{equation}
\noindent where as above $c$ is the speed of light.
\item [(b)] If $A\in\mathcal{B}_{\Lambda}$ then,

\begin{equation}\label{specificationlimit}
 \lim_{\,\,c \rightarrow\infty}\Pi_{\Lambda}(A,s)=\Pi_{\Lambda}^{\text Newt}(A,s)
\end{equation}
\noindent (where both sides of the equation are independent of $s$).
\end{enumerate}

\end{theorem}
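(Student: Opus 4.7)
\noindent The core of the proof is to show that on $\Lambda$,
\begin{equation*}
z\,\frac{K_{2}(\gamma(\mathbf{x}))}{\gamma(\mathbf{x})} \;\longrightarrow\; z_{\text{Newt}}\,e^{-\frac{1}{2}\beta m\alpha''(0)} \qquad \text{as } c\to\infty,
\end{equation*}
uniformly in $\mathbf{x}\in\Lambda$. Once this is established, both parts follow by standard grand canonical bookkeeping: for an ideal gas with $\rho_{\text{vac}}=0$, the energy $V_{\Lambda}(x|s)=\sum_{i}\phi_{1}(x_{i})$ is independent of the boundary configuration $s$, so the partition function and the specification both factorize as functionals of the one-body weight $zK_{2}(\gamma)/\gamma$.

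\smallskip

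\noindent To establish the key limit, I plug in the classical large-argument expansion $K_{2}(\gamma)=\sqrt{\pi/(2\gamma)}\,e^{-\gamma}(1+O(1/\gamma))$ and the Taylor expansion~\eqref{alpha2}, using $\gamma = \alpha\beta mc^{2} = \beta mc^{2} + \tfrac{1}{2}\beta m\alpha''(0) + O(1/c)$. With $z$ from Eq.~\eqref{activity}, the algebraic identity
\begin{equation*}
\frac{4\pi(mc)^{3}}{(2\pi\hbar)^{3}}\cdot\sqrt{\frac{\pi}{2}}\,(\beta mc^{2})^{-3/2} \;=\; \left(\frac{m}{2\pi\hbar^{2}\beta}\right)^{3/2}
\end{equation*}
collapses the relativistic prefactor onto the Newtonian activity~\eqref{activityN}, while the exponential factors combine as $e^{\beta mc^{2}-\gamma}=e^{\beta mc^{2}(1-\alpha)}\to e^{-\frac{1}{2}\beta m\alpha''(0)}$. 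To upgrade the pointwise limit to uniform convergence, I use that $\alpha''(\mathbf{x},0)$ and the Taylor remainder in~\eqref{alpha2} are continuous on the compact set $\bar{\Lambda}\times[0,y_{0}]$, hence bounded; this forces $\alpha\to 1$ and $\gamma\to\infty$ uniformly on $\Lambda$, so that the Bessel error term, the $\alpha^{-3/2}$ prefactor correction, and the higher-order Taylor remainder all vanish uniformly. This uniform control is the main technical point of the proof.

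\smallskip

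\noindent Part (a) is then immediate: since $|\Lambda|<\infty$ and the integrand converges uniformly on $\Lambda$, we obtain $z\int_{\Lambda}K_{2}(\gamma)/\gamma\,d\mathbf{x}\to z_{\text{Newt}}\int_{\Lambda}e^{-\frac{1}{2}\beta m\alpha''(0)}d\mathbf{x}$, and continuity of $\exp$ combined with Eq.~\eqref{znewton} yields~\eqref{partitionlimit}. For part (b), Remark~\ref{cylinderfunction} reduces $1_{A}(x\cup s_{X/\Lambda})$ to $1_{A_{\Lambda}}(x)$ for any cylinder $A\in\mathcal{B}_{\Lambda}$, so both $\Pi_{\Lambda}(A,s)$ and $\Pi_{\Lambda}^{\text{Newt}}(A,s)$ are manifestly independent of $s$ and expressible as ratios of series. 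The uniform bound $zK_{2}(\gamma)/\gamma\leq M$ on $\Lambda$ (for $c$ large) dominates the $n$-th numerator term by $(M|\Lambda|)^{n}/n!$, which is summable, so dominated convergence interchanges the limit $c\to\infty$ with $\sum_{n}$; each term converges by dominated convergence on $\Lambda^{n}$, using $0\leq 1_{A_{\Lambda}}\leq 1$ and the uniform convergence of the integrand. Part (a) supplies the limit of the denominator, completing the proof. The only non-routine step is the uniform convergence argument; everything else is ordinary term-by-term analysis.
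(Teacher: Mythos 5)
Your proposal is correct and follows essentially the same route as the paper: the large-argument asymptotics of $K_{2}$, the identity collapsing the relativistic prefactor in $z$ onto $z_{\text{Newt}}$, uniform convergence of the one-body weight $zK_{2}(\gamma)/\gamma\to z_{\text{Newt}}e^{-\frac{1}{2}\beta m\alpha''(0)}$ via compactness of $\bar\Lambda$, exponentiation for part (a), and dominated convergence of the term-by-term series for part (b). The paper phrases the uniformity with an explicit $\epsilon$--$\delta$ inequality and bounds the dominating function by a constant $C^{n}$, but these are cosmetic differences from your argument.
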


\begin{proof} a) The asymptotic behavior of $K_{2}(\gamma)$ for large argument is given by (see \cite{E53}),

 \begin{equation}\label{asymp}
e^{- \beta \phi _{1}}=\frac{K_2(\gamma)}{\gamma} \sim \sqrt{\frac{\pi}{2 \gamma^{3}}} e^{-\gamma}.
\end{equation}

\noindent Since $\vec{K}$ is timelike, $\|\vec{K}\|$ is nonzero and  bounded below on $\Lambda$. It follows that $\gamma=\|\vec{K}\|\beta mc\rightarrow\infty$ uniformly on $\Lambda$ as $c\rightarrow\infty$.  By \eqref{asymp}, given any $\epsilon>0$, there exists $\delta>0$ such that $1/c=y<\delta$ implies,

 \begin{equation}\label{asymp2}
\left|\frac{e^{-\beta \phi _{1}}}{\sqrt{\frac{\pi}{2 \gamma^{3}}} e^{-\gamma}}-1\right|<\epsilon.
\end{equation}
Multiplying by $z$, substituting Eqs.\eqref{activity}, \eqref{activityN},  and $\gamma=\alpha\beta mc^2$, and rearranging terms yields,

 \begin{equation}\label{asymp3}
\left|ze^{- \beta \phi _{1}}- z_{\text{Newt}}\frac{e^{\beta mc^{2}(1-\alpha)}}{\alpha^{3/2}}\right|<\epsilon z_{\text{Newt}}\frac{e^{\beta mc^{2}(1-\alpha)}}{\alpha^{3/2}}.
\end{equation}
Since $\alpha$ is smooth and the closure of $\Lambda$ is compact, it follows from Eq.\eqref{alpha2} that $\alpha\rightarrow1$ and $\beta mc^{2}(1-\alpha)\rightarrow-\beta m\alpha''(0)/2$, both uniformly on $\Lambda$ as $y\rightarrow0$.  It then follows from Eq.\eqref{asymp3} and the triangle inequality that,

\begin{equation}\label{uniform}
ze^{- \beta \phi _{1}}\rightarrow z_{\text{Newt}}e^{-\frac{1}{2}\beta m\alpha''(0)},
\end{equation}
uniformly on $\Lambda$ as $y\rightarrow0$. Integrating both sides of Eq.\eqref{uniform} then gives, 

\begin{equation}\label{i14}
\lim_{\,\,c \rightarrow\infty}zQ_{1}=z_{\text{Newt}}\int_{\Lambda}e^{-\frac{1}{2}\beta m\alpha''(0)}\;d\mathbf{x}.
\end{equation}

\noindent Thus, 

\begin{equation}\label{theorem1}
 \lim_{\,\,c \rightarrow\infty}Z_{\Lambda}=\lim_{\,\,c \rightarrow\infty}\exp[zQ_{1}]=\exp\left[ z_{\text{Newt}}\int_{\Lambda}e^{-\frac{1}{2}\beta m\alpha''(0)}\;d\mathbf{x}\right]=Z_{\text{Newt}}(\Lambda).
\end{equation}

\noindent
\noindent b) Let $A\in\mathcal{B}_{\Lambda}$ be a cylinder set, and let $A_{\Lambda}  =\{x_{\Lambda}:x \in A \}$.  By Remark \ref{cylinderfunction},
\begin{eqnarray}
1_A(x)=1_{A_{\Lambda}}(x_{\Lambda}). \label{cylinder set equation}
\end{eqnarray}
Since $A_{\Lambda} \subset \Gamma _{\Lambda}$ we may write $A_{\Lambda} =\cup_{n=0}^\infty A_{\Lambda}^n $, where 
$A_{\Lambda}^n=A_{\Lambda} \cap \Gamma _{\Lambda}^{(n)}$.  From Eqs. \eqref{poisson1} and \eqref{specification}, 

\begin{equation}
 \Pi _{\Lambda } (1_A, s)=\frac{1}{Z_{\Lambda}}\sum _{n=0}^{\infty} \frac{1}{n!} \int_{\Lambda^{n} }1_{{A}_{\Lambda}^n}(x_1, ..., x_n)\prod_{i=1}^{n}\left(z e^{- \beta \phi _{1}(x_{i})}\right) d\mathbf{x}^{n}.
\end{equation}

\noindent By Eq.\eqref{uniform}, there exists $c_0$ independent of $(x^{1}, x^{2}, x^{3})$ such that

\begin{equation}\label{uniform2}
ze^{- \beta \phi _{1}}\leq z_{\text{Newt}}e^{-\frac{1}{2}\beta m\alpha''(0)}+1,
\end{equation}
for all  $c \ge c_0$ and all $(x^{1}, x^{2}, x^{3}) \in \Lambda$. Since $\alpha''(0)$ is bounded on $\Lambda$, there is a constant $C$ such that

\begin{equation}\label{uniform3}
\begin{split}
\sum _{n=0}^{\infty} \frac{1}{n!}& \int_{\Lambda^{n} }1_{{A}_{\Lambda}^n}(x_1, ..., x_n)\prod_{i=1}^{n}\left( z_{\text{Newt}}e^{-\frac{1}{2}\beta m\alpha''(0)}+1\right) d\mathbf{x}^{n}\\
\le \sum _{n=0}^{\infty} &\frac{1}{n!} \int_{\Lambda^{n} }1_{{A}_{\Lambda}^n}(x_1, ..., x_n)C^n d\mathbf{x}^{n}
\le e^{C \int_{\Lambda}1 d\mathbf{x}} < \infty.
\end{split}
\end{equation}
Therefore part (b) follows from the Dominated Convergence Theorem and Eq.\eqref{theorem1}.

\end{proof}

\begin{corollary}\label{idealgibbs}
Assume that the Fermi surface $X$ is unbounded and let $\mu$ be the unique Gibbs state for the ideal gas.  Let $\mu_{\text Newt}$ be the unique Newtonian Gibbs state for the specification $\{\Pi_{\Lambda}^{\text Newt}\}$.  Then for any cylinder set $A$,

\begin{equation}\label{gibbslimit}
\lim_{\,\,c \rightarrow\infty}\mu(A) = \mu_{\text Newt}(A).
\end{equation}

\end{corollary}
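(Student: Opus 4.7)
My plan is to reduce the corollary to a direct application of Theorem~\ref{Newtonianlimit}(b). The observation that makes the ideal gas case essentially trivial is that the specification \eqref{specification} has no $s$-dependence when $\phi_n\equiv 0$ for all $n\geq 2$. Indeed, under the ideal gas assumption the interaction \eqref{V2} collapses to
\begin{equation*}
V_{\Lambda}(x\,|\,s)=|\Lambda|\rho_{\text{vac}}+\sum_{x_i\in x}\phi_1(x_i),
\end{equation*}
with no coupling between $x\subset\Lambda$ and the boundary configuration $s\cap(X\setminus\Lambda)$. Together with Remark~\ref{cylinderfunction}, which gives $1_A(x)=1_{A_\Lambda}(x_\Lambda)$ for any cylinder set $A\in\mathcal{B}_\Lambda$, this shows that the map $s\mapsto\Pi_\Lambda(A,s)$ is constant on its domain; I will denote its common value by $\pi_\Lambda(A)=\Pi_\Lambda(A,\varnothing)$.

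Next I would insert this into the DLR equation of Definition~\ref{gibbs}. Since $\mu$ is a probability measure, the DLR condition gives
\begin{equation*}
\mu(A)=\int_{\Gamma_X}\Pi_{\Lambda}(A,s)\,\mu(ds)=\pi_\Lambda(A)=\Pi_\Lambda(A,\varnothing),
\end{equation*}
and the identical argument applied to the Newtonian specification yields $\mu_{\text{Newt}}(A)=\Pi_\Lambda^{\text{Newt}}(A,\varnothing)$. (As a side benefit this also verifies the uniqueness asserted in the corollary's hypothesis, since cylinder sets generate $\mathcal{B}(\Gamma_X)$ and $\mu$ is determined on each $\mathcal{B}_\Lambda$ by the specification alone.)

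Having reduced the statement to an identity involving only the finite-volume specifications on a single fixed $\Lambda$, I would invoke Theorem~\ref{Newtonianlimit}(b) to conclude
\begin{equation*}
\lim_{c\to\infty}\mu(A)=\lim_{c\to\infty}\Pi_\Lambda(A,\varnothing)=\Pi_\Lambda^{\text{Newt}}(A,\varnothing)=\mu_{\text{Newt}}(A).
\end{equation*}
There is no substantive obstacle in this argument: all of the analytic content (the asymptotics of $K_2(\gamma)$ as $\gamma\to\infty$ and the dominated convergence that passes the $c\to\infty$ limit through the series defining $\Pi_\Lambda$) has already been absorbed into the proof of Theorem~\ref{Newtonianlimit}. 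The only point requiring explicit verification is the $s$-independence collapse of the DLR equation, and that is immediate once one notes that both the cylinder indicator and the reduced ideal-gas potential $V_\Lambda(\,\cdot\,|s)$ fail to see the configuration outside $\Lambda$.
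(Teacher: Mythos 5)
Your proposal is correct and follows essentially the same route as the paper: the paper's proof likewise observes that for a cylinder set $A$ the specification $\Pi_{\Lambda}(A,s)$ is independent of $s$, uses the DLR equation to identify $\mu(A)$ with $\Pi_{\Lambda}(A,s)$, and then applies Theorem~\ref{Newtonianlimit}(b). The only difference is that you spell out \emph{why} the $s$-dependence drops (the ideal-gas collapse of $V_{\Lambda}(\cdot|s)$ together with Remark~\ref{cylinderfunction}), whereas the paper states this in one line and cites Kolmogorov's theorem on projective limits for the existence half of the uniqueness claim, which your argument leaves to the hypothesis.
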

\begin{proof}
Existence and uniqueness of the Gibbs state for an ideal gas follows from Kolmogorov's theorem for projective limit spaces (see e.g. \cite{kuna} p. 21 or \cite{prob} Chap V, Theorem 3.2).  Since $A$ is a cylinder set, there exists $\Lambda\in\mathcal{B}_{c}(X)$ such that $\Pi_{\Lambda}(A,s)$ is independent of $s$.  It follows from Eq.\eqref{gibbs} and Theorem \ref{Newtonianlimit} that,

\begin{equation}\label{gibbslimit2}
\lim_{\,\,c \rightarrow\infty}\mu(A) = \lim_{\,\,c \rightarrow\infty}\Pi_{\Lambda}(A,s)=  \lim_{\,\,c \rightarrow\infty}\Pi_{\Lambda}^{\text Newt}(A,s)=\mu_{\text Newt}(A).
\end{equation}

\end{proof}

\begin{definition}\label{Newtonpressure}  Let $P_{\Lambda}(\beta, z)$, given by Eq.\eqref{idealgas1}, be the pressure of an ideal gas according to a timelike observer satisfying the hypotheses of Theorem \ref{Newtonianlimit}a. We define the associated Newtonian pressure $P_{\text{Newt}}(\beta, z, \Lambda)$ to be, 

\begin{equation}\label{znewton2}
\beta P_{\text{Newt}}(\beta, z, \Lambda)= \left(\int _{\Lambda}1\, d\mathbf{x}\right)^{-1}\log Z_{\text{Newt}}(\Lambda).
\end{equation}

\noindent In the case that $\rho_{\text{vac}}\neq0$, we define the associated Newtonian pressure by,

\begin{equation}\label{znewton3}
\beta P_{\text{Newt}}(\beta, z, \Lambda)=-\rho_{\text{vac}}+ \left(\int _{\Lambda}1\, d\mathbf{x}\right)^{-1}\log Z_{\text{Newt}}(\Lambda).
\end{equation}

\end{definition}

\noindent We note that in the preceding definition, the role of volume $\Lambda$ is restricted solely to the identification of (constant) limits of integration for the volume integral.  In the relativistic context, these limits of integration are determined by proper lengths of the dimensions of the container of gas, and in the Newtonian limit the limits of integration are absolute length measurements.\\

\noindent Our inclusion of the vacuum energy density term in Eq.\eqref{znewton3} is motivated in part by generalized Buchdahl inequalities that establish a minimum density of matter in space-times with a positive cosmological constant \cite{boehmer}.  Evidently, a collection of particles falling below a certain critical density will be pushed apart by a positive vacuum energy.

\section*{\normalsize{6. Kerr space-time}}

\noindent In this section we calculate $\alpha(y)$ and the limiting Newtonian potential energy function, $U(x^{1}, x^{2},x^{3})\equiv m\alpha''(0)/2$ for the case of circular geodesic orbit in the equatorial plane of Kerr space-time. The Kerr metric in Boyer-Lindquist coordinates is given by (recall that $y=1/c$),

\begin{equation}
\begin{split}\label{Kerr1}
ds^{2}=&-\left(1-\frac{2y^{2}GMr}{\rho^{2}}\right)\frac{1}{y^{2}}dt^{2}-\frac{4y^{2}GMar\sin^{2}\theta}{\rho^{2}}dtd\phi+\frac{\Sigma}{\rho^{2}}\sin^{2}\theta d\phi^{2}\\
&+\frac{\rho^{2}}{\Delta}dr^{2}+\rho^{2}d\theta^{2},
\end{split}
\end{equation}

\noindent where,
\begin{equation}
\begin{split}\label{Kerr2}
\rho^{2}&=r^{2}+y^{2}a^{2}\cos^{2}\theta,\\
\Delta&=r^{2}-2GMy^{2}r+y^{2}a^{2},\\
\Sigma&=\left(r^{2}+y^{2}a^{2}\right)^{2}-y^{2}a^{2}\Delta\sin^{2}\theta,\\
\end{split}
\end{equation}

\noindent and where $G$ is the gravitational constant, $M$ is mass, and $a$ is the angular momentum per unit mass, and $-GMy\leqslant a \leqslant GMy$.\\ 

\noindent Below we will need  notation for $\Delta$, and $\Sigma$, evaluated at  the specific coordinates $r=r_{0}$, and $\theta=\pi/2$.  For that purpose, we define,
\begin{equation}
\begin{split}\label{Kerr2b}
\Delta_{0}&=r_{0}^{2}-2GMy^{2}r_{0}+y^{2}a^{2},\\
\Sigma_{0}&=\left(r_{0}^{2}+y^{2}a^{2}\right)^{2}-y^{2}a^{2}\Delta_{0}.\\
\end{split}
\end{equation}

\noindent The circular orbit in the equatorial plane (see, e.g., \cite{bardeen}, \cite{BS})  with $d\phi/dt >0$ at radial coordinate $r_{0}$, is given by,

\begin{equation}
\label{schwarz2a}
\sigma(t)=\left(t, r_{0}, \pi/2, \frac{t}{y^{2} a + \sqrt{\frac{r_{0}^{3}}{GM}}}\right),
\end{equation}

\noindent When $a>0$, this orbit is co-rotational and when $a<0$ the orbit is retrograde. All circular orbits in the equatorial plane are stable for $ r_{0} > 9GMy^{2}$, independent of $a$, but the co-rotational circular orbit is stable for smaller values, even inside the ergosphere at $r_{0}= 2GMy^{2}$ when $a$ is sufficiently close to $GMy^{2}$ (cf. \cite{bardeen}).\\

\noindent The Killing field tangent to the path (where it is the four velocity) is given by,

\begin{equation}
\label{T32}
\vec{K}=\left(\frac{y^{2} a\sqrt{GM}+ r_{0}^{(3/2)}}{\mathcal{D}},0,0,\frac{\sqrt{GM}}{\mathcal{D}}\right),
\end{equation}

\noindent where,

\begin{equation}
\label{T33}
\mathcal{D}=\sqrt{2y^{2} a\sqrt{GM}r_{0}^{(3/2)}+r_{0}^{3}-3y^{2}GMr_{0}^{2}}.
\end{equation}

\noindent  Using Eqs.\eqref{Kerr1} and \eqref{T32}, $\alpha(y)=y\sqrt{-K^{\alpha}K_{\alpha}}$ may be found in Boyer-Lindquist coordinates.  Straightforward calculations show that $\alpha(0)=1$ and $\alpha'(0)=0$ and,

\begin{equation}
\label{T29}
\frac{1}{2}m\alpha''(0)=-\frac{GMm}{r}-\frac{GMmr^{2}\sin^{2}\theta}{2r_{0}^{3}}+\frac{3GMm}{2r_{0}}.
\end{equation}

\noindent This expression may be interpreted as Newtonian potential energy of a single particle in a container of particles in circular orbit with angular velocity,

\begin{equation}\label{angvel}
\dot{\phi}\equiv\frac{d\phi}{dt}=\sqrt\frac{GM}{r_{0}^{3}},
\end{equation}

\noindent where $r_{0}$ is the radius of the orbit. The first term on the right hand side of Eq.\eqref{T29} is the gravitational potential energy, and the third term is an additive constant that forces $\alpha''(0) =0$ when $r=r_{0}$ and $\theta = \pi/2$, i.e., at the origin of coordinates for the rotating container.  The second term is the centrifugal potential energy, more readily recognized when expressed in cylindrical coordinates.  To that end, let $\tilde{\rho} = r\sin\theta$,  with $\phi$ the azimuthal angle, and let $z$ measure linear distance along the axis of rotation.  The magnitude of angular momentum of a uniformly rotating particle of mass $m$ with cylindrical coordinates $(\tilde{\rho},\phi,z)$ is $\ell=m\tilde{\rho}^{2}\dot{\phi}$, and the centrifugal potential energy is then given by,

\begin{equation}\label{angvel2}
-\frac{\ell^{2}}{2m\tilde{\rho}^{2}}= -\frac{GMm\tilde{\rho}^{2}}{2r_{0}^{2}}=-\frac{GMmr^{2}\sin^{2}\theta}{2r_{0}^{3}}.
\end{equation}

\noindent The minus signs in Eq.\eqref{angvel2} take into account the direction of force, away from the central mass.\\ 

\noindent  In order to compute $\alpha''(0)$ in Fermi coordinates, we select the following tetrad vectors in the tangent space at $\sigma(0)$.

\begin{align}
e_{0}&=\left(\frac{y^{2} a\sqrt{GM}+r_{0}^{(3/2)}}{\mathcal{D}},0,0,\frac{\sqrt{GM}}{\mathcal{D}}\right),\notag\\
e_{1}&=\left (0,\frac{\sqrt{\Delta_{0}}}{r_{0}},0,0\right ),\notag\\
e_{2}&=\left (0,0,\frac{1}{r_{0}},0\right )\label{tt1},\\
e_{3}&=\left (\!\frac{y^{2}\sqrt{GM}\!\left(y^{2}a^{2}\!+\!r_{0}^{2}\!-\!2y^{2} a\sqrt{GMr_{0}}\right)}{\mathcal{D}\sqrt{\Delta_{0}}},0,0,\!\frac{y^{2}\sqrt{GM}(a -2\sqrt{r_{0}})+r_{0}^{(3/2)}}{\mathcal{D}\sqrt{\Delta_{0}}}\!\right)\notag
\end{align}

\noindent This tetrad may be extended via parallel transport to the entire circular orbit given by Eq.\eqref{schwarz2a}, but we need these tetrad vectors only at $\sigma(0)$.\\ 

\noindent Fermi coordinates relative to these coordinate axes (i.e., the above tetrad) may be calculated using Eq.(27) of \cite{KC1}. The result for the Boyer-Lindquist coordinates $r$ and $\theta$ to second order expressed in the Fermi space coordinates $x^{1},x^{2},x^{3}$ at ($x^{0} = \tau = t=0$) is,
\begin{eqnarray}
\theta&=&\frac{\pi}{2}+\frac{x^{2}}{r_{0}}-\frac{\sqrt{\Delta_{0}}\,x^{1}x^{2}}{r_{0}^{3}}+ \cdots,\label{tr3}\\
r&=&r_{0}+\frac{\sqrt{\Delta_{0}}\,x^{1}}{r_{0}}+\frac{\left(r_{0}y^{2}GM-y^{2}a^{2}\right)\left(x^{1}\right)^{2}}{2r_{0}^{3}}\nonumber\\
&\quad&+\frac{\Delta_{0}\left(x^{2}\right)^{2}}{2r_{0}^{3}}+\frac{\left(r_{0}-y^{2}GM\right)\left(x^{3}\right)^{2}}{2r_{0}^{2}}+\cdots\label{tr3'},
\end{eqnarray}

\noindent Now, calculating $\alpha(y)=y\sqrt{-K^{\alpha}K_{\alpha}}$ in Boyer-Lindquist coordinates, substituting for $r$ and $\theta$ using Eqs.\eqref{tr3} and \eqref{tr3'} gives,

\begin{eqnarray}
\begin{split}\label{alpha}
\alpha(y) =1&+\frac{GMy^{2}\left(r_{0}^{2}+3\,a^{2}y^{2}-4\,ay^{2}\sqrt{r_{0}\,GM}\right)\left(x^{2}\right)^{2}}{2\,r_{0}^{2}\,\mathcal{D}^{2}}\\
&-\frac{3\,GMy^{2}\Delta_{0}\left(x^{1}\right)^{2}}{2\,r_{0}^{2}\,\mathcal{D}^{2}}+O(3)
\end{split}
\end{eqnarray}

\noindent Computing the second derivative with respect to $y$ at $y=0$ yields,

\begin{equation}
\label{T30}
\begin{split}
\frac{1}{2}&m\alpha''(0)=\left(-\frac{GMm}{r_{0}}+\frac{GMm\,x^{1}}{r_{0}^{2}}-\frac{GMm\left(2\left(x^{1}\right)^{2}-\left(x^{2}\right)^{2}-\left(x^{3}\right)^{2}\right)}{2r_{0}^{3}}\right)\\
&-\left(\frac{GMm}{2r_{0}}+\frac{GMm\,x^{1}}{r_{0}^{2}}+\frac{GMm\left(\left(x^{1}\right)^{2}+\left(x^{3}\right)^{2}\right)}{2r_{0}^{3}}\right)+\frac{3GMm}{2r_{0}}+O(3).
\end{split}
\end{equation}

\noindent Eq.\eqref{T30} may be compared term-by-term with Eq.\eqref{T29}. The expression in the first pair of parentheses on the right hand side of Eq.\eqref{T30} is the Taylor expansion to second order of the gravitational potential, which is the first term on the right hand side of Eq.\eqref{T29}.  The second terms in both equations are related analogously.  At the point $\sigma(0)$ in the orbit, the Cartesian variable $x^{1}$ in Eq.\eqref{T30} measures (Newtonian) distance from the origin of coordinates in the radial direction away from the central mass, $x^{2}$ measures distance in the ``z direction'' parallel to the axis of rotation, and $x^{3}$ measures distance in the tangential direction, parallel to the motion of the container of gas in orbit. (However, these orientations do not hold at other parts of the orbit since the coordinate axes are nonrotating.)  Eq.\eqref{T30} may obviously be simplified to yield the potential energy function,

\begin{equation}
\label{T31}
U(x^{1},x^{2},x^{3})=\frac{1}{2}m\alpha''(0)=-\frac{3GmM\left(x^{1}\right)^{2}}{2r_{0}^{3}}+\frac{GmM\left(x^{2}\right)^{2}}{2r_{0}^{3}}+O(3).
\end{equation}

\noindent Combining Eq.\eqref{T31} with Eqs.\eqref{znewton} and \eqref{znewton2} immediately yields the Newtonian pressure (with $y =1/c$) for a gas in circular orbit around a central mass to which the relativistic counterpart given by Eq.\eqref{idealgas1} may be compared.\\

\section*{\normalsize{7. Anti-de Sitter space}}

In this section, we prove uniqueness of the infinite volume Gibbs state in anti-de Sitter space for a class of equilibrium interaction potentials and calculate the infinite volume relativistic pressure for an ideal gas of test particles.\\

\noindent Explicit transformation formulas to and from Fermi coordinates $\{x^{0}, x^{1}, x^{2}, x^{3}\}$ for a geodesic observer in (the covering space for) anti-de Sitter space-time are given in \cite{KC3}. The geodesic path of the Fermi observer in these coordinates is $\sigma(t)=(t,0,0,0)$.  Fermi coordinates for this space-time are global, i.e., one coordinate patch covers the entire space-time.\\  

\noindent Under the change of coordinates, $x^{0}=t$, $x^{1}=\rho \sin\theta \cos\phi$, $x^{2}=\rho \sin\theta \sin\phi$, $x^{3}=\rho \cos\theta$, the metric becomes diagonal. In these ``polar-Fermi coordinates,'' for any appropriate range of the angular coordinates, the line element becomes, 

\begin{equation}\label{polarfermi}
ds^{2}= -c^{2}\cosh^{2}\left( a\rho \right) dt^{2}+d\rho^{2}+\frac{\sinh^{2}(a\rho)}{a^{2}}(d\theta^{2} + \sin^{2} \theta \,d\phi^{2}),
\end{equation}

\noindent where $a =\sqrt{ |\lambda|/3}$ and $\lambda <0$ is the cosmological constant.  For this space-time, $\vec{K}=\partial/\partial t$ is a timelike Killing vector field tangent to $\sigma$ so that $e_{0} = \vec{K}$. From Eqs.\eqref{gamma} and \eqref{polarfermi},

\begin{equation}\label{newgamma2}
\gamma(x^{1}, x^{2}, x^{3})=\|\vec{K}\|\beta mc =\beta m c^2\cosh(a \rho),
\end{equation}

\noindent  where $\rho=\sqrt {(x^1)^2+(x^2)^2+(x^3)^2}$. The Fermi surface, $X$, given by Eq.\eqref{slice2}, is hyperbolic space.\\

\noindent Because a single chart, with domain $\mathbb{R}^{3}$, covers the entire Fermi surface for this space-time, each configuration of particles on the Fermi surface corresponds to a configuration on $\mathbb{R}^{3}$, and Gibbs states on the Fermi surface are in a natural one-to-one correspondence with Gibbs states of configurations on $\mathbb{R}^{3}$ with Poisson measure of Eq.\eqref{poisson1} based on Lebesgue measure $d\mathbf{x}$.  For an elaboration, see \cite{rockner, albeverio, kuna}. In this context, and noting Remark \ref{existence}, we may regard $V$ as an interaction on configurations on $\mathbb{R}^{3}$ and apply the following special case of Theorem 1 of \cite{KY2} to the present circumstances.\\ 

\begin{theorem}\label{old}
Suppose there exists an increasing sequence of bounded Borel sets $\{\Lambda_{k}\}$ whose union is $\mathbb{R}^{3}$ such that:
\begin{enumerate}
\item [(1)]$\phi_{n}(x) =0$, for any $n\geq2$ and $x=(x_{1}, ..., x_{n})$ such that $x\cap\Lambda_{k}\neq\varnothing\neq x\cap(\mathbb{R}^{3}/\Lambda_{k+1})$ for some k;
\item[(2)] $\sum_{k}\left[\sup_{s}Z_{A_{k}}(s, \beta, z)\right]^{-1}$ diverges, where $A_{k}=\Lambda_{k+1}/\Lambda_{k}$.
\end{enumerate}
Then there is at most one Gibbs state for $V,\beta, z$.
\end{theorem}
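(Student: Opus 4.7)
I would reduce the uniqueness claim to showing $\mu(A) = \mu'(A)$ for any two Gibbs states $\mu, \mu'$ and any cylinder set $A \in \mathcal{B}_{\Lambda_{k_0}}$, since cylinder sets generate $\mathcal{B}(\Gamma_X)$. The central geometric device is the family of ``empty annulus'' events $\Omega_k := \{x : N_{A_k}(x) = 0\}$, and the proof rests on combining a decoupling property supplied by condition~(1) with a conditional Borel--Cantelli estimate supplied by condition~(2).

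First I would use condition~(1) to establish the decoupling: no term $\phi_n(y)$ in $V_\Lambda(\cdot|s)$ links a point of $\Lambda_k$ to one of $X \setminus \Lambda_{k+1}$, so whenever $\Omega_k$ occurs the Hamiltonian splits cleanly across $A_k$, and the conditional law of the configuration in $\Lambda_k$ given $\Omega_k$ and any outside-$\Lambda_{k+1}$ boundary reduces to the deterministic measure $\Pi_{\Lambda_k}(\cdot, \varnothing)$. Using Remark~\ref{existence} I would reduce to the case $\rho_{\text{vac}} = 0$; the empty-configuration term in the Poisson measure then gives the uniform lower bound $\Pi_{A_k}(\Omega_k, s) = 1/Z_{A_k}(s) \geq c_k := 1/\sup_s Z_{A_k}(s)$.

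Because the annuli $A_k$ are pairwise disjoint, iterated application of the DLR at $A_{k_1}, A_{k_2}, \ldots$ yields, for any probability measure $\nu$ consistent with the specification (in particular any Gibbs state and any finite-volume specification $\Pi_{\Lambda_n}(\cdot, s)$),
$$\nu\!\left(\bigcap_{j=1}^{m}\Omega_{k_j}^{c}\right) \leq \prod_{j=1}^{m}(1 - c_{k_j}).$$
Condition~(2) is exactly $\sum_k c_k = \infty$, so the right-hand side tends to zero along any cofinal sequence of indices; hence $\nu(\Omega_k \text{ i.o.}) = 1$. The same estimate applied to $\nu = \Pi_{\Lambda_n}(\cdot, s)$ shows that the maximal hitting time $\tau_n^{\max} := \max\{k \in [k_0, n-1] : \Omega_k\}$ satisfies $\nu(\tau_n^{\max} < K) \leq \prod_{K \leq l < n}(1 - c_l) \to 0$ as $n \to \infty$, uniformly in $s$.

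Combining these ingredients, the decoupling (conditional on $\tau_n^{\max}=k$ the interior $\Lambda_k$ is distributed as $\Pi_{\Lambda_k}(\cdot,\varnothing)$) yields the representation
$$\nu(A) = \sum_{k=k_0}^{n-1} Y_k\, \nu(\tau_n^{\max}=k) + R_n, \qquad Y_k := \Pi_{\Lambda_k}(A, \varnothing),$$
with $|R_n|$ controlled by the product bound above. Applying this to $\mu$ and $\mu'$ via their DLR equations, together with the uniform concentration of $\tau_n^{\max}$, a routine estimate yields $|\mu(A) - \mu'(A)| \leq 2\sup_{k \geq K}|Y_k - Y_K|$ for every $K$. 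The main obstacle, and the heart of the argument, is then to let $K \to \infty$ --- one must show that the deterministic sequence $\{Y_k\}$ is Cauchy. My plan is to bootstrap: applying the same decomposition-bound to the finite-volume measure $\Pi_{\Lambda_m}(\cdot, \varnothing)$ (which itself satisfies the DLR at every $\Lambda' \subset \Lambda_m$), one can argue that any two subsequential limits of $\{Y_k\}$ must coincide via a comparison with a limiting Gibbs state, so that $Y_k$ converges and the uniqueness bound closes.
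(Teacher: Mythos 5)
First, a point of reference: the paper does not actually prove Theorem~\ref{old}; it is quoted as a special case of Theorem~1 of \cite{KY2}, so there is no in-text proof to compare against. Judged on its own terms, your framework is the right one and reproduces the standard ingredients correctly: the empty-annulus events $\Omega_k$ as decoupling surfaces, the uniform lower bound $\Pi_{A_k}(\Omega_k,s)\ge c_k:=1/\sup_s Z_{A_k}(s)$ coming from the $n=0$ term of the Poisson measure (after discarding $\rho_{\text{vac}}$ via Remark~\ref{existence}), the iterated-DLR product bound $\nu\bigl(\bigcap_j\Omega_{k_j}^{c}\bigr)\le\prod_j(1-c_{k_j})$, and the decomposition of $\nu(A)$ over the outermost empty annulus, with $\{\tau_n^{\max}=k\}$ correctly chosen so as to be measurable outside $\Lambda_k$.

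The gap is the step you yourself flag: convergence of $Y_k=\Pi_{\Lambda_k}(A,\varnothing)$. The bootstrap you propose --- writing $Y_m$ as an approximate weighted average of $\{Y_k\}_{k<m}$ with weights concentrating at large $k$ --- does not close it: an averaging recursion of that form is satisfied by oscillating sequences (nothing prevents the weights $\Pi_{\Lambda_m}(\tau_m^{\max}=k,\varnothing)$ from tracking the oscillation), and ``comparison with a limiting Gibbs state'' is circular, since distinct subsequential limits of $Y_k$ would simply correspond to distinct Gibbs states, which is exactly what is to be excluded (and the existence of such limits is not among the hypotheses). The standard repair is to keep the full boundary dependence rather than specializing to $s=\varnothing$: set $f_n(s)=\Pi_{\Lambda_n}(A,s)$ and use consistency of the specification together with condition (1) to write $f_{n+1}(s)=p_n(s)\,Y_n+(1-p_n(s))\,g_n(s)$, where $p_n(s)=\Pi_{\Lambda_{n+1}}(\Omega_n,s)\ge c_n$ and $g_n(s)$ lies in $I_n:=[\inf_s f_n(s),\sup_s f_n(s)]$. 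Since $Y_n=f_n(\varnothing)\in I_n$ as well, the intervals $I_n$ are nested and contract, $|I_{n+1}|\le(1-c_n)|I_n|\rightarrow 0$ because $\sum_n c_n=\infty$; hence $\bigcap_n I_n$ is a single point, $Y_k$ converges to it, and every Gibbs state assigns $A$ that common value because $\mu(A)=\int f_n\,d\mu\in I_n$ for all $n$. With this one-step oscillation contraction in place most of your other estimates become unnecessary; without it, the proof as proposed does not conclude.
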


\noindent Assume that $V$ is a superstable interaction and has the form given by Eqs.\eqref{1body} and \eqref{V2}.  Assume further that V has finite range and satisfies the condition,

\begin{equation}\label{stable}
\tilde{V}_{\Lambda}(x_{1}, ..., x_{n}|s)\equiv \sum_{n=2}^{\infty}\, \sum _{\substack{y \subset x\cup (s\cap X/\Lambda)\\
y\cap x\neq\varnothing\\|y|=n}}
\phi_{n}(y)>-Bn,
\end{equation}

\noindent for some $B>0$, all boundary configurations $s$, and all configurations $(x_{1}, ..., x_{n})$.
 Observe that $\tilde{V}$ is the same as $V$ except that $\tilde{V}$ does not include the vacuum energy $\phi_{0}$ or the one-body potential function, $\phi_{1}$.  The potential $V$ satisfies Eq.\eqref{stable} if it is positive or satisfies a hard-core condition (see \cite{KY2}).\\ 
 
 \begin{theorem}\label{uniqeness}
 Let $V$ be a potential energy function with finite range $R$ satisfying Eq.\eqref{stable} on the Fermi surface $X$ in anti-de Sitter space.  Then there is at most one Gibbs state on $(\Gamma_{X}, \mathcal{B}(\Gamma_{X}))$ for $V, \beta, z$ for all choices of $\beta$ and $z$.
 \end{theorem}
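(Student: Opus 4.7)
The plan is to apply Theorem~\ref{old}. The Fermi surface $X$ for anti-de Sitter space is globally covered by a single chart and identified with $\mathbb{R}^3$ equipped with Poisson measure built on Lebesgue $d\mathbf{x}$, so the hypotheses of Theorem~\ref{old} apply directly to $V$. By Remark~\ref{existence}, the specifications and the family of Gibbs states are invariant under the constant $\phi_0$, so we may pass to the gauge equivalent potential obtained by subtracting $|\Lambda|\rho_{\text{vac}}$ from $V_\Lambda$; call its partition function $Z_\Lambda^{\#}$. Verifying conditions (1) and (2) of Theorem~\ref{old} for $Z^{\#}$ yields uniqueness for the gauge-equivalent potential, hence for $V$. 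This reduction is essential: in anti-de Sitter space $\rho_{\text{vac}}<0$ and the Riemannian volume of large shells in $X$ grows exponentially in the radius, so the factor $e^{-\beta|A_k|\rho_{\text{vac}}}$ would blow up super-exponentially and defeat condition (2) if retained.

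For condition (1), I choose $\Lambda_k = \{x \in \mathbb{R}^3 : |x|_{\text{Eucl}} \leq 2kR\}$. The Euclidean norm in Fermi Cartesian coordinates coincides with the polar Fermi coordinate $\rho$, which by \eqref{polarfermi} is proper arc length along radial geodesics, and since $g^{\rho\rho}=1$ the function $\rho$ is $1$-Lipschitz on $(X,{}^3g)$. Hence any two points whose $\rho$-coordinates differ by more than $2R$ are separated by proper distance exceeding $R$, so whenever a configuration $y$ meets both $\Lambda_k$ and $\mathbb{R}^3\setminus\Lambda_{k+1}$, the finite range assumption forces $\phi_n(y)=0$ for $n\geq 2$.

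For condition (2), the superstability bound \eqref{stable}, together with the splitting $V_{A_k}(x|s)-|A_k|\rho_{\text{vac}} = \sum_{i=1}^n \phi_1(x_i) + \tilde V_{A_k}(x|s)\ge \sum_{i=1}^n \phi_1(x_i)-Bn$, gives
\begin{equation*}
\sup_s Z^{\#}_{A_k}(s) \le \exp\!\left[z\, e^{\beta B}\int_{A_k} e^{-\beta\phi_1(x)}\,d\mathbf{x}\right].
\end{equation*}
On $A_k$, Eq.~\eqref{newgamma2} gives $\gamma(x)=\beta m c^2\cosh(a\rho)\ge \tfrac12\beta mc^2\, e^{a\rho}$, so the asymptotic \eqref{asymp} shows that the one-body weight $e^{-\beta\phi_1}=K_2(\gamma)/\gamma$ decays doubly exponentially in $\rho$. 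Because the Lebesgue volume of $A_k$ grows only polynomially in $k$, the integral tends to zero super-exponentially. Hence $\sup_s Z^{\#}_{A_k}(s)\to 1$ and $\sum_k [\sup_s Z^{\#}_{A_k}(s)]^{-1}$ clearly diverges. The main obstacle I anticipate is bookkeeping rather than analysis: cleanly separating $\phi_1$ from the higher-body terms, tracking the superstability bound uniformly in the boundary configuration $s$, and reconciling the vacuum-energy reduction with the form of Theorem~\ref{old} quoted from \cite{KY2}. Once that is in order, the doubly exponential decay of $K_2(\gamma)/\gamma$ in $\rho$ dominates every polynomial factor that appears.
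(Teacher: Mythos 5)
Your proof is correct and follows essentially the same route as the paper's: apply Theorem~\ref{old} to concentric shells of width comparable to the range $R$, verify condition (1) from the finite-range hypothesis, and verify condition (2) by combining the lower bound \eqref{stable} with the doubly exponential decay of $e^{-\beta\phi_1}=K_2(\gamma)/\gamma$ in $\rho$ against the only polynomially growing Lebesgue volume of the shells. Your explicit gauge reduction removing the vacuum term $|\Lambda|\rho_{\text{vac}}$ via Remark~\ref{existence} addresses a point the paper leaves implicit (its bound \eqref{bound} simply omits that factor), and since $\rho_{\text{vac}}<0$ in anti-de Sitter space while the Riemannian shell volumes grow exponentially, spelling this out as you do is the cleaner justification of that step.
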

 
 \begin{proof}
 Let $\Lambda_{k}\subset X$ be the sphere of radius $kR$ centered at the origin of coordinates in the chart for X, where $R$ is as in Def. \ref{finiterange}.  Then condition $(1)$ of Theorem \ref{old} is satisfied by $\{\Lambda_{k}\}$.  To see that condition $(2)$ of Theorem \ref{old} is also satisfied, let $A_{k}$ be defined as in Theorem \ref{old} and observe by Eq.\eqref{stable} that,

\begin{equation}\label{bound}
\sup_{s} Z_{A_k} (s, \beta, z) \le \int_{\Gamma_{A_{k}}} \exp\left[\beta Bn  -\beta \sum_{i=1}^{|x|} \phi _{1}(x_{i})\right]\,\nu_{A_{k}}(dx).
\end{equation}

 \noindent It follows from Eqs.\eqref{asymp} and \eqref{newgamma2} that when  $\rho=\sqrt {(x^1)^2+(x^2)^2+(x^3)^2}$ is sufficiently large,
 
 \begin{equation}\label{asymp2}
e^{- \beta \phi _1 (x^1, x^2, x^3)}\leq e^{-\beta m c^2\cosh(a \rho)}.
\end{equation}

 \noindent Then for large $k$,
 
 \begin{equation}\label{bound2}
\sup_s Z_{A_k} (s, \beta, z) \le \sum_{n=0}^\infty \exp [\beta B n - \beta n \cosh(kR) ] \frac{z^n}{n!}  |A_{k}|^{n}, 
\end{equation}

\noindent where $|A_{k}|=\int _{A_{k}} d x^1 dx^2 dx^3$, i.e., the Lebesgue measure of the image of $A_{k}$ in the Fermi coordinate chart.  Hence, $|A_{k}|< C(kR)^{2}$ for a constant $C$. Thus,

\begin{equation}\label{bound3}
\sup_s Z_{A_k} (s, \beta, z) \le \exp [zC(Rk)^{2} e^{\beta B  - \beta  \cosh(kR)} ], 
\end{equation}

 \noindent and condition $(2)$ of Theorem \ref{old} is clearly satisfied.
 \end{proof}

\begin{corollary}\label{adspressure}
The pressure of an ideal gas of test particles in anti-de Sitter space is given by, 
\begin{equation}\label{adspressure}
P=-\frac{\lambda c^{4}}{8\pi G},
\end{equation}
\noindent where $\lambda<0$ is the cosmological constant for anti-de Sitter space.
 \end{corollary}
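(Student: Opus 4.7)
The plan is to apply the finite-volume ideal gas law of Eq.\eqref{idealgas1} to the exhausting sequence $\{\Lambda_n\}$ of Fermi balls and then show that the one-body integral contribution gets washed out by the exponentially growing Riemannian volume, leaving only the vacuum term. Concretely, for an ideal gas in anti-de Sitter space,
\begin{equation*}
\beta P_{\Lambda_n}(\beta,z)=-\beta\rho_{\text{vac}}+\frac{z}{|\Lambda_n|}\int_{\Lambda_n}\frac{K_2(\gamma(x))}{\gamma(x)}\,d\mathbf{x},
\end{equation*}
so by Def.\ \ref{defpressure} it suffices to show that the second summand tends to zero as $n\to\infty$.

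First I would pass to polar Fermi coordinates via the metric \eqref{polarfermi}; in these coordinates $\Lambda_n=\{\rho\le n\}$, the Lebesgue element is $d\mathbf{x}=\rho^{2}\sin\theta\,d\rho\,d\theta\,d\phi$, and the Riemannian volume element on the Fermi surface is $\sinh^{2}(a\rho)\sin\theta\,d\rho\,d\theta\,d\phi/a^{2}$. Therefore
\begin{equation*}
|\Lambda_n|=\frac{4\pi}{a^{2}}\int_{0}^{n}\sinh^{2}(a\rho)\,d\rho\sim\frac{\pi}{a^{3}}e^{2an}\qquad(n\to\infty),
\end{equation*}
so $|\Lambda_n|$ grows exponentially in $n$.

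Next, using the identity $\gamma=\beta mc^{2}\cosh(a\rho)$ from Eq.\eqref{newgamma2} and the asymptotic \eqref{asymp}, the integrand $K_2(\gamma)/\gamma$ is bounded (for $\rho$ outside a compact set) by $\sqrt{\pi/(2\gamma^{3})}\,e^{-\gamma}$, which decays like $\exp[-\tfrac{1}{2}\beta mc^{2}e^{a\rho}]$. This super-exponential decay dominates the polynomial factor $\rho^{2}$ from $d\mathbf{x}$, so
\begin{equation*}
\int_{\Lambda_n}\frac{K_2(\gamma(x))}{\gamma(x)}\,d\mathbf{x}=4\pi\int_{0}^{n}\frac{K_2(\beta mc^{2}\cosh(a\rho))}{\beta mc^{2}\cosh(a\rho)}\rho^{2}\,d\rho
\end{equation*}
converges to a finite limit as $n\to\infty$.

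Dividing a bounded quantity by $|\Lambda_n|\to\infty$ gives zero, so $\lim_{n\to\infty}\beta P_{\Lambda_n}=-\beta\rho_{\text{vac}}$; substituting Eq.\eqref{vacuumenergy} yields $P=-\lambda c^{4}/(8\pi G)$. The result is consistent with the existence and uniqueness of the infinite-volume Gibbs state in this setting (Theorem \ref{uniqeness} applied to the ideal gas, or more directly Kolmogorov's theorem as in Corollary \ref{idealgibbs}), so the limit defining $P$ in Def.\ \ref{defpressure} indeed exists. The only mild technical point to watch is to keep straight which measure is in play at each step—$d\mathbf{x}$ is Lebesgue measure in the Fermi chart, while $|\Lambda_n|$ uses $\sqrt{\det{}^{3}g}$—but once the exponential gap between them is identified, the conclusion is essentially immediate.
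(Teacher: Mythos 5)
Your proof is correct and follows essentially the same route as the paper's: pass to spherical (polar-Fermi) coordinates, observe via Eq.\eqref{asymp} and $\gamma=\beta mc^{2}\cosh(a\rho)$ that the one-body integral stays bounded while $|\Lambda_{n}|\to\infty$, so only $-\rho_{\text{vac}}$ survives in Def.\ \ref{defpressure}. (Your asymptotic constant for $|\Lambda_{n}|$ is off by a factor of $2$ --- it should be $\pi e^{2an}/(2a^{3})$ --- but only the divergence of $|\Lambda_{n}|$ is needed, so this is immaterial.)
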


\begin{proof}
With the same notation as in the proof of Theorem \ref{uniqeness}, using Eq.\eqref{idealgas1} and changing to spherical coordinates, we find,

\begin{equation}\label{thm3b}
P_{\Lambda_{k}}(\beta, z)= -\rho_{\text{vac}}+\frac{4\pi z}{\beta mc^{2}|\Lambda_{k}|}\int_{0}^{k}\frac{K_{2}(\beta m c^2\cosh(a \rho))}{\cosh(a \rho)}\rho^{2}d\rho.
\end{equation}

\noindent The result now follows from Eqs.\eqref{pressure}, \eqref{vacuumenergy}, and \eqref{asymp}.

\end{proof}

\begin{remark}\label{nonewton} The infinite volume pressure of an ideal gas of test particles in anti-de Sitter space is independent of temperature and chemical activity, and depends only on the magnitude of the cosmological constant. There is no corresponding Newtonian analog, as developed in Sect. 5, because Eq.\eqref{alpha2} does not hold for the Killing vector $\partial/\partial t$.
\end{remark}

\section*{\normalsize{8. Concluding Remarks}}

For space-times with timelike Killing fields, we have developed a grand canonical formalism for statistical mechanical systems of test particles.    The thermodynamic behavior of an ideal gas was shown in Sect. 4 to be determined by the metric of the space-time through the norm of the timelike Killing field, and the volume of the particle system as determined by projection $^{3}g$ of the metric on the Fermi space slice of the timelike observer.\\ 

\noindent We proved uniqueness of the Gibbs state for a class of interaction potentials in anti-de Sitter space, and found the infinite volume pressure for an ideal gas of test particles. For the case of an ideal gas in space-times where Eq.\eqref{alpha2} is satisfied, we introduced a notion of Newtonian limit of Gibbs states and of grand canonical pressures.  The physical legitimacy of the grand canonical formalism was demonstrated in Sect. 6, where it was shown that the Newtonian limit of a relativistic particle system following a circular geodesic orbit in Kerr space-time results in exactly what is predicted by classical mechanics.\\  

\noindent Directions for further research include possible  generalizations of the statistical mechanical formalism which assume only the existence of approximate timelike Killing vector fields or conformal timelike Killing fields, relaxing our assumption of the existence of a timelike Killing vector field.  In addition, alternative notions of simultaneity associated with  coordinate systems other than Fermi-Walker-Killing coordinates might be considered, such as standard curvature-normalized coordinates on Robertson-Walker cosmologies for observers co-moving with the Hubble flow, or optical coordinates (in which events are ``simultaneous'' if they are visible to the observer $\sigma$ at the same proper time coordinate of $\sigma$).  In this way, the effect of the curvature of space on statistical mechanical behavior of test particles might be analyzed for realistic circumstances.\\

\noindent \textbf{Acknowledgment.} The authors wish to thank Peter Collas for helpful discussions.

\end{document}